\documentclass{article} 
\usepackage{iclr2026_conference,times}


\usepackage{amsmath,amsfonts,bm}

\let\originalleft\left
\let\originalright\right
\renewcommand{\left}{\mathopen{}\mathclose\bgroup\originalleft}
\renewcommand{\right}{\aftergroup\egroup\originalright}









\def\eqref#1{equation~\ref{#1}}









\def\1{\bm{1}}




\def\rve{{\mathbf{e}}}

\def\rvw{{\mathbf{w}}}
\def\rvx{{\mathbf{x}}}
\def\rvy{{\mathbf{y}}}
\def\rvz{{\mathbf{z}}}




\def\vzero{{\bm{0}}}

\def\vmu{{\bm{\mu}}}

\def\va{{\bm{a}}}
\def\vb{{\bm{b}}}

\def\ve{{\bm{e}}}
\def\vepsilon{\bm{\epsilon}}

\def\vs{{\bm{s}}}

\def\vw{{\bm{w}}}
\def\vx{{\bm{x}}}
\def\vy{{\bm{y}}}


\def\mA{{\bm{A}}}

\def\mG{{\bm{G}}}

\def\mI{{\bm{I}}}

\def\mK{{\bm{K}}}

\def\mM{{\bm{M}}}

\def\mU{{\bm{U}}}

\def\mLambda{{\bm{\Lambda}}}
\def\mSigma{{\bm{\Sigma}}}

\DeclareMathAlphabet{\mathsfit}{\encodingdefault}{\sfdefault}{m}{sl}
\SetMathAlphabet{\mathsfit}{bold}{\encodingdefault}{\sfdefault}{bx}{n}
\newcommand{\tens}[1]{\bm{\mathsfit{#1}}}
\def\tA{{\tens{A}}}


\def\gI{{\mathcal{I}}}

\def\gM{{\mathcal{M}}}
\def\gN{{\mathcal{N}}}

\def\gT{{\mathcal{T}}}



\def\sZ{{\mathbb{Z}}}








\newcommand{\E}{\mathbb{E}}

\newcommand{\R}{\mathbb{R}}

\newcommand{\KL}{D_{\mathrm{KL}}}

\newcommand{\Cov}{\mathrm{Cov}}

\newcommand{\norm}[1]{\left\lVert#1\right\rVert}
\newcommand{\paren}[1]{\left(#1\right)}

\newcommand{\bracket}[1]{\left[#1\right]}

\newcommand{\snr}{{\Gamma}}


\usepackage{hyperref}
\hypersetup{
    colorlinks=true,
    linkcolor=red,
    citecolor=citationblue
}
\usepackage{url}
\usepackage{mathtools}
\usepackage{amsmath}
\usepackage{amssymb}
\usepackage{amsthm}
\usepackage{thmtools}
\usepackage{thm-restate}
\usepackage{enumitem}
\usepackage{booktabs}
\usepackage{multirow}
\usepackage{subcaption}
\usepackage{threeparttable}
\usepackage{dsfont}
\usepackage{titletoc}
\usepackage{cleveref}

\setlist{labelindent=0pt,labelwidth=0.75em,leftmargin=!}

\newtheorem{proposition}{Proposition}

\newtheorem{definition}{Definition}

\long\def\comment#1{}

\crefname{table}{Tab.}{Tabs.}
\Crefname{table}{Table}{Tables}
\crefname{figure}{Fig.}{Figs.}
\Crefname{figure}{Figure}{Figures}
\crefname{equation}{Eq.}{Eqs.}
\Crefname{equation}{Equation}{Equations}
\crefname{section}{Sec.}{Secs.}
\Crefname{section}{Section}{Sections}
\crefname{appendix}{App.}{Apps.}
\Crefname{appendix}{Appendix}{Appendices}
\crefname{proposition}{Prop.}{Props.}
\Crefname{proposition}{Proposition}{Propositions}
\crefname{theorem}{Thm.}{Thms.}
\Crefname{theorem}{Theorem}{Theorems}
\crefname{definition}{Def.}{Defs.}
\Crefname{definition}{Definition}{Definitions}
\definecolor{citationblue}{rgb}{0.2,0.3,0.6}

\definecolor{airforceblue}{rgb}{0.36, 0.54, 0.66}

\newcommand{\eg}{\textit{e.g.}}
\newcommand{\ie}{\textit{i.e.}}
\renewcommand\d{\mathrm{d}}

\title{Learning a distance measure from the \\
information-estimation geometry of data}

\newcommand*{\email}[1]{{\fontfamily{cmtt}\selectfont#1}}

\author{Guy Ohayon\\
Flatiron Institute\\
\email{gohayon@flatironinstitute.org}
\And
Pierre-Etienne H. Fiquet\\
Flatiron Institute\\
\email{pfiquet@flatironinstitute.org}
\And
Florentin Guth\\
Flatiron Institute\\
New York University\\
\email{florentin.guth@nyu.edu}
\And
Jona Ballé\\
New York University\\
\email{jona.balle@nyu.edu}
\And
Eero P. Simoncelli\\
Flatiron Institute\\
New York University\\
\email{eero.simoncelli@nyu.edu}
}

\iclrfinalcopy 
\begin{document}

\maketitle

\begin{abstract}
We introduce the Information-Estimation Metric (IEM), a novel form of distance function derived from an underlying continuous probability density over a domain of signals.
The IEM is rooted in a fundamental relationship between information theory and estimation theory, which links the log-probability of a signal with the errors of an optimal denoiser, applied to noisy observations of the signal.
In particular, the IEM between a pair of signals is obtained by comparing their denoising error vectors over a range of noise amplitudes.
Geometrically, this amounts to comparing the score vector fields of the \emph{blurred} density around the signals over a range of blur levels.
We prove that the IEM is a valid global distance metric and derive a closed-form expression for its local second-order approximation, which yields a Riemannian metric.
For Gaussian-distributed signals, the IEM coincides with the Mahalanobis distance.
But for more complex distributions, it adapts, both locally and globally, to the geometry of the distribution.
In practice, the IEM can be computed using a learned denoiser (analogous to generative diffusion models) and solving a one-dimensional integral.
To demonstrate the value of our framework, we learn an IEM on the ImageNet database.
Experiments show that this IEM is competitive with or outperforms state-of-the-art supervised image quality metrics in predicting human perceptual judgments.
\end{abstract}
\section{Introduction}\label{sec:introduction}
Distance functions are central to many scientific and engineering enterprises, enabling systematic comparison, organization, and interpretation of data.
In some cases, a meaningful notion of distance arises from the structure or distribution of the data (\eg, geodesics in Riemannian manifolds, z-scores for Gaussian distributions), or from the requirements of the task (\eg, Hamming distance in error-correcting codes, edit distance in text processing).
However, this is often not the case.
For instance, algorithms that process natural signals (\eg, compression engines) should ideally be evaluated in terms of human perception, for which no precise mathematical definition is available.
Numerous algorithms aiming to mimic human perception have been proposed~\citep{wang2004image,heusel2017gans,8578166,ding2020iqa,10478301}, with the most successful approaches to date being those trained (supervised) on databases of human perceptual judgments.
Nevertheless, this reliance on human-labeled data is problematic, as data annotation is a highly costly and noisy procedure.
More importantly, supervised approaches are difficult to interpret mathematically, making it harder to explain the principles that underlie our perceptual judgments of similarity between natural signals~\citep{barlow1989unsupervised}.
Deriving a perceptual metric solely based on unlabeled data remains a fundamental open problem of both scientific and practical importance.

A natural opportunity for developing such a metric arises from the concept of coding efficiency.
Biological sensory systems are believed to decompose incoming signals in a manner that maximizes the transmission of information about those signals, subject to biological constraints (\eg, noise, metabolic cost) \citep{attneave1954informational, barlow, Laughlin1981-fr, van1992theory, atick1992does, olshausen_emergence_1996, barlow2001redundancy, simoncelli2001natural}.
Put differently, sensory pathways function as communication channels optimized for natural signals, implying that our ability to discriminate between natural signals depends on their statistical properties.
Indeed, for one-dimensional sensory attributes, previous work has shown that perceptual sensitivity to small signal perturbations increases with the probability of the signal \citep{Laughlin1981-fr, Ganguli2014-au, Wei2017lawful}.
However, in the multivariate setting, such as color discrimination (\ie, detecting changes in hue or saturation), humans exhibit complex patterns of sensitivity that vary with the \emph{direction} of the signal's perturbation~\citep{macadam1942visual}.
This leaves us with a conundrum: how can a probability density, which is a scalar function, induce a Riemannian metric, let alone a global distance function between any pair of signals in the domain?

In an attempt to construct a distance function from a probability density, it is natural to resort to principles from information theory.
Unfortunately, information-theoretic quantities are agnostic to the \emph{geometry} of the probability distribution.
For example, the mutual information between random variables is invariant to bijective (even discontinuous) transformations of the variables.
In contrast, estimation quantities such as denoising error are explicitly tied to the geometry of the density through an assumed observation model (\eg, additive Gaussian noise) and loss function (\eg, square error).
Despite this salient difference, a line of work \citep{immse,guo2013interplay} rooted in information theory \citep{STAM1959101} and empirical Bayesian methods \citep{robbins1956empirical} has revealed an extensive correspondence between these seemingly unrelated quantities. 
It takes the form of a set of relationships that express information quantities in terms of estimation quantities, thereby linking probability (information) with geometry (the shape of the ``data support'').
In particular, \emph{scalar} probability values can be decomposed into denoising error \emph{vectors}, which provide a natural way to characterize the geometry of the signal density.
Indeed, denoising errors are proportional to the \emph{score} (gradient of the log) of the signal density \emph{blurred} through convolution with a Gaussian density.
This relationship between denoising errors and scores, known as the Tweedie--Miyasawa formula \citep{robbins1956empirical,miyasawa1961empirical}, is the foundation of generative diffusion models~\citep{sohl2015deep,ho2020denoising,song2020score}.

Building on these information–estimation relationships, we introduce a novel form of distance function which is derived from the geometry of a given probability density.
Our distance, coined the Information-Estimation Metric (IEM), compares the score vector fields of the blurred density in the vicinity of two given signals.
More specifically, it is defined as the mean square error (MSE) between these score vector fields, integrated over a range of blur levels (\ie, Gaussian noise magnitudes).
We prove that the IEM is a valid distance metric (in the mathematical sense), and show that it coincides with the Mahalanobis distance~\citep{Mahalanobis} when the prior density is Gaussian.
For more complex priors, however, the IEM reflects the structure of the ``data support''---adapting to the \emph{global geometry} of the density.
Furthermore, we analyze the local behavior of the IEM by deriving the second-order expansion of the distance between a signal and its perturbed version, which yields a Riemannian metric.
We show that this Riemannian metric is most sensitive (1) in regions where the curvature of the log-density is highest, and (2) to perturbations that induce the largest changes in the signal’s probability.
This implies that the IEM behaves like a \emph{locally adaptive} Mahalanobis distance---conforming to the \emph{local geometry} of the density.
Importantly, the IEM can be efficiently learned from samples by training a denoiser (\ie, a diffusion model).
We train such a denoiser on ImageNet~\citep{5206848} and use it to compute the IEM.
Although the IEM is learned unsupervised from unlabeled image data, we find that it is competitive with supervised perceptual distance measures in terms of predicting human judgments of image similarity.
\section{The Information-Estimation Metric}\label{sec:approach}
We aim to construct a distance function that is induced by the geometry of a given probability density. In information theory, it is natural to compare two signals $\vx_1$ and $\vx_2$ using their log-probability ratio, which may be turned into a ``distance'' by taking its square value. However, this is a poor choice, as it depends solely on the (scalar) values of the density at the two points. Instead, we would like a distance measure that is associated with the \emph{geometry} of the density (\eg, the curvature of the density around the two signals). To this end, we build upon a fundamental equation that \emph{decomposes} the log-probability of a signal in terms of the geometry of the probability density in the vicinity of the signal. We then apply this decomposition to the log-probability \emph{ratio} of two signals, yielding a distance metric that adapts to the density's geometry.

\paragraph{Observation channel.}
Let $p_{\rvx}$ denote the probability density function of a random vector $\rvx$ taking values in $\R^{d}$ (\ie, the signal).
To decompose $p_{\rvx}$, we introduce an observation process $\rvy_\gamma$ such that $p_{\rvy_{\gamma}}$ gradually ``zooms'' into $p_{\rvx}$ as the signal-to-noise ratio (SNR) $\gamma$ is increased, analogously to how diffusion models generate samples.
Specifically, we define $\rvy_{\gamma}$ as a Gaussian channel
\begin{align}
\rvy_{\gamma}=\gamma\rvx+\rvw_{\gamma},\label{eq:gaussian_channel}
\end{align}
where $\smash{\rvw_{\gamma}\sim\gN(\vzero,\gamma\mI)}$ is a standard Wiener process which is independent of $\rvx$.
Since the noise $\rvw_{\gamma}$ is statistically independent of $\rvx$, the distribution $p_{\rvy_{\gamma}}$ is obtained by \emph{blurring} $p_{\rvx}$ through convolution with the Gaussian density $p_{\rvw_{\gamma}}$.
Viewing $\log p_{\rvy_{\gamma}}(\rvy_{\gamma})$ as a {stochastic process} that evolves with $\gamma$, we can {decompose} $\log p_\rvx(\rvx)$ in terms of the \emph{increments} of this process.
By combining two fundamental relations from previous work, we show next that these increments characterize the geometry of $\log{p_{\rvx}}$ in the vicinity of $\rvx$.

\paragraph{Pointwise I-MMSE.}
\citet{pointwise-relations} proved that $\log{p_{\rvy_{\snr}}(\rvy_\snr)}$ for any fixed $\smash{\snr>0}$ can be expressed in terms of the denoising error vectors of the minimum mean square error (MMSE) estimator of $\rvx$ from $\rvy_\gamma$, $\E\bracket{\rvx\,|\,\rvy_{\gamma}}$, integrated across all SNR levels $\gamma\in[0,\snr]$.
Formally,
\begin{align}
-\log{p_{\rvy_{\snr}}(\rvy_\snr)}=\int_{0}^{\snr}\paren{\rvx-\E\bracket{
\rvx\,|\,\rvy_{\gamma}}}\cdot \d\rvw_{\gamma}+\frac{1}{2}\int_{0}^{\snr}\norm{
\rvx-\E\bracket{
\rvx\,|\,\rvy_{\gamma}}}^{2}\d\gamma-\log p_{\rvw_{\snr}}(\rvw_{\snr}),\label{eq:pointwise-i-mmse}
\end{align}
where this equality holds with probability one (almost surely), \ie, it holds \emph{pointwise} for almost every realization of the signal $\smash{\rvx=\vx}$ and of the Wiener process trajectory $\smash{\{\rvw_{\gamma}=\vw_{\gamma}\}_{\gamma=0}^{\snr}}$.
\Cref{eq:pointwise-i-mmse}, which we refer to as the pointwise I-MMSE formula, is a generalization of the I-MMSE formula~\citep{immse}, whose roots date back to de Bruijn's identity from the 1950s~\citep[][see~\cref{appendix:i-mmse-review} for more detailed background]{STAM1959101}.
When $\snr \to \infty$,~\cref{eq:pointwise-i-mmse} expresses the log-density of the original signal, $\log p_\rvx(\rvx)$, in terms of the denoising errors at all $\gamma\in[0,\infty)$.

\paragraph{Geometric interpretation.}
Denoising errors are related to the \emph{gradients} of $\log{p_{\rvy_{\gamma}}}$, \ie, the scores of the blurred density $p_{\rvy_{\gamma}}$, via the Tweedie--Miyasawa formula~\citep{robbins1956empirical,miyasawa1961empirical}:
\begin{align}
    \rvx-\E\bracket{
\rvx\,|\,\rvy_{\gamma}}=-\frac{1}{\gamma}\rvw_{\gamma}-\nabla\log{p_{\rvy_{\gamma}}(\rvy_{\gamma})} ,
\label{eq:tweedie}
\end{align}
where the gradient on the right-hand side is taken w.r.t.~$\rvy_{\gamma}$.
Substituting this formula into~\cref{eq:pointwise-i-mmse}, we now see that $\log p_{\rvx}(\rvx)$ (a \emph{scalar}) can be decomposed in terms of the local geometry of $\log{p_{\rvy_{\gamma}}}(\rvy_{\gamma})$, particularly the \emph{gradients} $\nabla \log p_{\rvy_\gamma}(\rvy_\gamma)$, at all SNR levels $\gamma$.
We refer to this decomposition as the \emph{information-estimation geometry} of the density $p_{\rvx}$.
\begin{figure}[t!]
    \centering
    \includegraphics[width=1\linewidth]{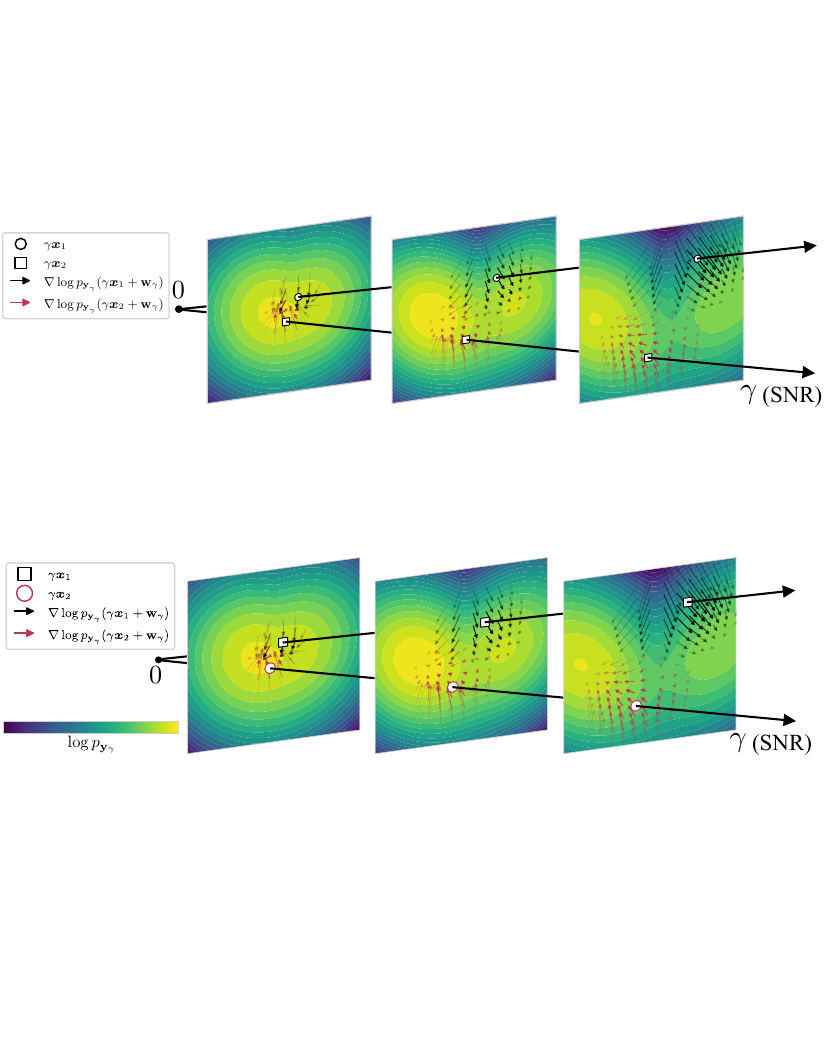}
    \caption{
    \textbf{The information-estimation geometry around two points.}
    We show a Gaussian mixture log-density and its gradient vector fields around the points $\gamma\vx_{1}$ and $\gamma\vx_{2}$ for three different SNR levels $\gamma$.
    The space is rescaled by $\gamma$ and the distribution collapses to a point at $\gamma=0$.
    When blurring the density (small $\gamma$), the two modes merge, and the gradients around $\gamma\vx_{1}$ point toward either of the modes.
    When the two modes are far enough apart (large $\gamma$), most gradient vectors point toward their closest mode.
    Thus, the local gradients around a given point can capture different geometrical features of the distribution, depending on the SNR $\gamma$.
    The Information-Estimation Metric (IEM, \cref{eq:distance-qv}) between the two points $\vx_{1}$ and $\vx_{2}$ is the square error between the local gradient fields around them, weighted by a Gaussian window (illustrated by the opacity of the gradients' arrows) and integrated over all levels of SNR $\gamma\in[0,\Gamma]$.}
    \label{fig:multiscale}
\end{figure}

\paragraph{Definition of the Information-Estimation Metric (IEM).}
The relationships above suggest a natural way to compare two arbitrary points $\vx_1$ and $\vx_2$, by tracking the increments of their log-probability ratio under the blurred density, $\smash{\log{(p_{\rvy_{\gamma}}(\gamma\vx_{1}+\rvw_{\gamma})/p_{\rvy_{\gamma}}(\gamma\vx_{2}+\rvw_{\gamma}))}}$.
Doing so amounts to comparing the local geometry of $\log{p_\rvx}$ around $\vx_1$ and $\vx_2$, as illustrated in \cref{fig:multiscale}. 
Specifically, by combining \cref{eq:pointwise-i-mmse,eq:tweedie}, we obtain
\begin{align}
    \label{eq:diff-x-pointwise-i-mmse}
    \log &\bigg(\frac{p_{\rvy_{\snr}}(\snr \vx_{1}+\rvw_{\snr})}{p_{\rvy_{\snr}}(\snr \vx_{2}+\rvw_{\snr})}\bigg)=\int_{0}^{\snr}\paren{\nabla\log{p_{\rvy_{\gamma}}(\gamma\vx_{1}+\rvw_{\gamma})}-\nabla\log{p_{\rvy_{\gamma}}(\gamma\vx_{2}+\rvw_{\gamma})}}\cdot \d\rvw_{\gamma}\nonumber\\
    &{}-\frac{1}{2}\int_{0}^{\snr}\paren{\norm{\nabla\log{p_{\rvy_{\gamma}}(\gamma\vx_{1}+\rvw_{\gamma})}+\rvw_{\gamma}/\gamma}^{2}-\norm{\nabla\log{p_{\rvy_{\gamma}}(\gamma\vx_{2}+\rvw_{\gamma})}+\rvw_{\gamma}/\gamma}^{2}}\d\gamma.
\end{align}
Since~\cref{eq:diff-x-pointwise-i-mmse} is an It\^{o} process, it is natural to quantify the sum of its squared increments by taking the expected \emph{quadratic variation} of the process, which is simply the second moment of the diffusion coefficient integrated over the range $\gamma\in[0,\snr]$. 
This leads to our proposed distance function.
\begin{definition}\label{eq:distance-qv}
The \emph{Information-Estimation Metric (IEM)} induced by the density $p_{\rvx}$ is defined as
\begin{align}
\boxed{\textnormal{IEM}(\vx_{1},\vx_{2},\Gamma)\coloneqq\paren{\int_{0}^{\snr}\!\E\bracket{\norm{\nabla\log{p_{\rvy_{\gamma}}(\gamma\vx_{1}+\rvw_{\gamma})}-\nabla\log{p_{\rvy_{\gamma}}(\gamma\vx_{2}+\rvw_{\gamma})}}^{2}}\d\gamma}^{\frac{1}{2}}} \nonumber
\end{align}
where the expectations are taken over $p_{\rvw_\gamma}$ for each $\gamma$.
\end{definition}
For ease of notation, we write $\text{IEM}(\vx_{1},\vx_{2}) \coloneqq \text{IEM}(\vx_{1},\vx_{2}, \infty)$.
Although our construction does not make it obvious, the IEM is a proper distance metric (see proof in \cref{appendix:quadratic-is-metric}):
\begin{restatable}{theorem}{propermetric}\label{prop:proper_metric}
For every $\smash{\snr>0}$, the $\textnormal{IEM}$ is a proper distance metric: it is symmetric, non-negative, equal to zero if and only if $\vx_{1}=\vx_{2}$, and it satisfies the triangle inequality.
\end{restatable}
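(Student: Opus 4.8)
The plan is to recognize the $\textnormal{IEM}$ as the Hilbert-space distance $\norm{\Phi_{\vx_{1}}-\Phi_{\vx_{2}}}$ between suitable feature vectors $\Phi_{\vx}$, so that symmetry, non-negativity, and the triangle inequality come essentially for free, and then to prove identity of indiscernibles by a separate argument that exploits the smoothness and global integrability of the Gaussian-blurred density.

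Fix $\snr>0$. Since $\rvw_{\gamma}$ is equal in distribution to $\sqrt{\gamma}\,\rvz$ with $\rvz\sim\gN(\vzero,\mI)$, the inner expectation in \cref{eq:distance-qv} may be taken over a single standard Gaussian. Set $H\coloneqq L^{2}\!\big((0,\snr)\times\R^{d},\,\d\gamma\otimes\gN(\vzero,\mI);\,\R^{d}\big)$ and, for each $\vx$, define $\Phi_{\vx}(\gamma,\vz)\coloneqq\nabla\log p_{\rvy_{\gamma}}(\gamma\vx+\sqrt{\gamma}\,\vz)$. Then, by construction, $\textnormal{IEM}(\vx_{1},\vx_{2},\snr)=\norm{\Phi_{\vx_{1}}-\Phi_{\vx_{2}}}_{H}$, read as an element of $[0,\infty]$ (it is the $H$-norm of the difference whenever that difference lies in $H$, and $+\infty$ otherwise). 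Symmetry and non-negativity are then immediate, as is the easy direction of the identity condition, since $\Phi_{\vx}-\Phi_{\vx}=0$. For the triangle inequality, write $\Phi_{\vx_{1}}-\Phi_{\vx_{3}}=(\Phi_{\vx_{1}}-\Phi_{\vx_{2}})+(\Phi_{\vx_{2}}-\Phi_{\vx_{3}})$ and apply Minkowski's inequality: if either summand has infinite norm the bound is trivial, and otherwise both summands lie in $H$ so the ordinary inequality applies. (If one additionally wants the $\textnormal{IEM}$ to be finite-valued rather than an extended metric, it suffices to assume $\Phi_{\vx}\in H$ for every $\vx$, e.g.\ $\int_{0}^{\snr}\E\norm{\nabla\log p_{\rvy_{\gamma}}(\gamma\vx+\sqrt{\gamma}\,\rvz)}^{2}\d\gamma<\infty$.)

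The substantive step is to show that $\textnormal{IEM}(\vx_{1},\vx_{2},\snr)=0$ forces $\vx_{1}=\vx_{2}$. Vanishing of the non-negative integral in \cref{eq:distance-qv} implies that the inner expectation vanishes for Lebesgue-a.e.\ $\gamma\in(0,\snr)$; since the standard Gaussian density is everywhere positive, for each such $\gamma$ this in turn forces $\nabla\log p_{\rvy_{\gamma}}(\gamma\vx_{1}+\vu)=\nabla\log p_{\rvy_{\gamma}}(\gamma\vx_{2}+\vu)$ for Lebesgue-a.e.\ $\vu\in\R^{d}$. Fix one such $\gamma=\gamma_{*}>0$. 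The density $p_{\rvy_{\gamma_{*}}}$ is the convolution of the law of $\gamma_{*}\rvx$ with the non-degenerate Gaussian $\gN(\vzero,\gamma_{*}\mI)$, hence $C^{\infty}$ and everywhere positive, so $\nabla\log p_{\rvy_{\gamma_{*}}}$ is continuous and the a.e.\ identity in fact holds for all $\vu$. Writing $\vv\coloneqq\gamma_{*}(\vx_{1}-\vx_{2})$ and $g\coloneqq\log p_{\rvy_{\gamma_{*}}}$, this says $\nabla g(\vy+\vv)=\nabla g(\vy)$ for every $\vy$; hence $\vy\mapsto g(\vy+\vv)-g(\vy)$ has vanishing gradient on the connected set $\R^{d}$ and equals a constant $c$, and by induction $p_{\rvy_{\gamma_{*}}}(\vy+n\vv)=e^{nc}\,p_{\rvy_{\gamma_{*}}}(\vy)$ for all $n\in\sZ$.

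It remains to rule out $\vv\neq\vzero$. Assume $\vv\ne\vzero$ and let $S_{0}\coloneqq\{\vy:\langle\vy,\vv\rangle/\norm{\vv}^{2}\in[0,1)\}$, so that the translates $\{S_{0}+n\vv\}_{n\in\sZ}$ form a partition of $\R^{d}$. A change of variables in each slab, together with the relation above, gives
\[
1=\int_{\R^{d}}p_{\rvy_{\gamma_{*}}}=\sum_{n\in\sZ}\int_{S_{0}+n\vv}p_{\rvy_{\gamma_{*}}}=\Big(\int_{S_{0}}p_{\rvy_{\gamma_{*}}}\Big)\sum_{n\in\sZ}e^{nc}.
\]
But $\sum_{n\in\sZ}e^{nc}=\infty$ for every real $c$, since infinitely many of its terms are $\geq1$; hence the right-hand side is $0$ (if $\int_{S_{0}}p_{\rvy_{\gamma_{*}}}=0$) or $+\infty$, never $1$. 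This contradiction forces $\vv=\vzero$, i.e.\ $\vx_{1}=\vx_{2}$. I expect the identity-of-indiscernibles argument of the last two paragraphs to be the main obstacle, and the place to be most careful. The conclusion genuinely relies on the scores being those of a \emph{blurred} density — so that they are continuous and the density is a globally integrable probability density — and it would fail for the raw scores of $p_{\rvx}$ or if only local information about the density were available.
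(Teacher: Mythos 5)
Your proposal is correct and follows essentially the same route as the paper: the triangle inequality via an $L^2$/Minkowski argument (you package it as a single Hilbert-space norm, the paper does it in two steps over the noise and then over $\gamma$), and positive definiteness by reducing to quasi-invariance of $p_{\rvy_{\gamma}}$ under the translation $\gamma(\vx_{1}-\vx_{2})$ and deriving a contradiction from a slab partition of $\R^{d}$. The only minor differences are that you handle a general multiplicative constant $e^{nc}$ directly in the slab argument (the paper first shows $c=0$ by integrating over $\R^{d}$) and that you explicitly invoke smoothness of the blurred density to upgrade the a.e.\ identity of scores to an everywhere identity, a point the paper passes over; both are cosmetic refinements rather than a different proof.
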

In \cref{appendix:relation_to_kl}, we discuss an intriguing relationship between the IEM and the Kullback--Leibler (KL) divergence between distributions. Specifically, we interpret the IEM as a local decomposition of the KL divergence between two translated copies of $p_\rvx$, centered at $\vx_1$ and $\vx_2$. We also define a \emph{mismatched} IEM, generalizing the IEM to the case where $\vx_1$ and $\vx_2$ are assumed to come from \emph{different} distributions.

\subsection{Local geometry}
To gain insight into the properties of the IEM, we study its local behavior, namely the distance between a given signal $\vx$ and its perturbation $\smash{\vx + \vepsilon}$ for small $\vepsilon$.
As for any distance, $\smash{\vepsilon = 0}$ is a global minimum, and we can express the quadratic expansion of the IEM in $\vepsilon$ as $\smash{\text{IEM}^2(\vx,\vx+\vepsilon,\snr)=\vepsilon^{\top}\mG(\vx,\snr)\vepsilon+o(\norm{\vepsilon}^{2})}$.
The positive-definite matrix $\mG(\vx,\snr)$ then acts as a local metric (in the Riemannian sense), but note that its relationship with the IEM is one-way: the IEM is \emph{not} equivalent to the geodesic distance that corresponds to $\mG(\vx,\snr)$.
The local metric $\mG(\vx,\snr)$ is characterized in the following theorem (see proof in~\cref{app:local_metric}):
\begin{restatable}{theorem}{localmetric}\label{prop:local_metric}
    The local Riemannian metric derived from the second-order Taylor expansion of the squared $\textnormal{IEM}$ is given by
    \begin{align}
    \mG(\vx, \Gamma) &= \int_{0}^{\Gamma} \gamma^{2} \E\bracket{ \paren{\nabla^{2}\log{p_{\rvy_{\gamma}}(\gamma\vx+\rvw_{\gamma})}}^{2} } \d\gamma \label{eq:local_metric_hessian} \\
    &= \int_0^\Gamma \E\bracket{\paren{\mI - \gamma\Cov\bracket{\rvx\,|\,\rvy_{\gamma} = \gamma\vx + \rvw_\gamma}}^2}\d\gamma,
    \label{eq:local_metric_cov}
    \end{align}
    where the expectations are taken over $p_{\rvw_\gamma}$ for each $\gamma$.
    Moreover, for $\Gamma = \infty$ we have
    \begin{align}
    \E\bracket{\mG(\rvx)} = \E\bracket{-\nabla^2 \log p_{\rvx}(\rvx)} = \E\bracket{\nabla\log p_\rvx(\rvx)\nabla\log p_\rvx(\rvx)^\top},
    \label{eq:average_local_metric}
    \end{align}
    where we denote $\mG(\vx) \coloneqq \mG(\vx, \infty)$ and the expectations are taken over $p_{\rvx}$.
\end{restatable}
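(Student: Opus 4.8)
The plan is to Taylor-expand the integrand of $\textnormal{IEM}^2(\vx,\vx+\vepsilon,\Gamma)$ in $\vepsilon$, integrate over the noise level to read off the quadratic form $\mG(\vx,\Gamma)$, and then --- for the $\Gamma=\infty$ identity --- recognize the resulting matrix integral as a telescoping integral of a Fisher-information quantity along the Gaussian heat flow.

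I would start by abbreviating the score of the blurred density as $s_\gamma\coloneqq\nabla\log p_{\rvy_\gamma}$, which is smooth in its argument. In \cref{eq:distance-qv} with $\vx_1=\vx$ and $\vx_2=\vx+\vepsilon$ the two evaluation points differ by $\gamma\vepsilon$, so Taylor's theorem gives $s_\gamma(\gamma\vx+\rvw_\gamma)-s_\gamma(\gamma\vx+\rvw_\gamma+\gamma\vepsilon)=-\gamma\,\nabla^2\log p_{\rvy_\gamma}(\gamma\vx+\rvw_\gamma)\,\vepsilon+o(\norm{\vepsilon})$. Since the Hessian is symmetric, squaring the norm and taking the expectation over $\rvw_\gamma$ leaves $\gamma^2\,\vepsilon^\top\E\bracket{\paren{\nabla^2\log p_{\rvy_\gamma}(\gamma\vx+\rvw_\gamma)}^2}\vepsilon+o(\norm{\vepsilon}^2)$, and integrating over $\gamma\in[0,\Gamma]$ --- pulling the limit $\vepsilon\to\vzero$ through the integral and the expectation by dominated convergence --- gives $\textnormal{IEM}^2(\vx,\vx+\vepsilon,\Gamma)=\vepsilon^\top\mG(\vx,\Gamma)\vepsilon+o(\norm{\vepsilon}^2)$ with $\mG(\vx,\Gamma)$ as in \eqref{eq:local_metric_hessian}; this matrix is an integral of squares of symmetric matrices, hence positive semidefinite (and positive definite under mild nondegeneracy of $p_\rvx$). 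For the covariance form I would use the exponential-family representation $p_{\rvy_\gamma}(\vy)\propto\exp(-\|\vy\|^2/(2\gamma))\int p_\rvx(\vx)\exp(-\tfrac{\gamma}{2}\|\vx\|^2)\exp(\vx\cdot\vy)\,\d\vx$, which exhibits $\vy$ as the natural parameter of the posterior of $\rvx$ given $\rvy_\gamma$, with sufficient statistic $\vx$; differentiating $\log p_{\rvy_\gamma}$ twice yields the second-order Tweedie--Miyasawa relation $\nabla^2\log p_{\rvy_\gamma}(\vy)=\Cov\bracket{\rvx\,|\,\rvy_\gamma=\vy}-\tfrac{1}{\gamma}\mI$, whose first-order counterpart is \eqref{eq:tweedie}, so that $\gamma^2\paren{\nabla^2\log p_{\rvy_\gamma}}^2=\paren{\mI-\gamma\Cov\bracket{\rvx\,|\,\rvy_\gamma}}^2$ and \eqref{eq:local_metric_cov} follows from \eqref{eq:local_metric_hessian}.

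For the $\Gamma=\infty$ identity I would rescale by $\gamma$: since $p_{\rvy_\gamma}(\vy)=\gamma^{-d}q_{1/\gamma}(\vy/\gamma)$, where $q_t$ denotes the heat flow $\partial_t q_t=\tfrac{1}{2}\Delta q_t$ started from $p_\rvx$ (equivalently, the density of $\rvx$ plus isotropic Gaussian noise of variance $t$), one gets $\gamma^2\nabla^2\log p_{\rvy_\gamma}(\gamma\vx+\rvw_\gamma)=\gamma^{-2}\nabla^2\log q_{1/\gamma}(\vx+\rvw_\gamma/\gamma)$ with $\rvw_\gamma/\gamma\sim\gN(\vzero,\gamma^{-1}\mI)$. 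Substituting $t=1/\gamma$, so that $\gamma^{-2}\,\d\gamma=-\d t$, turns \eqref{eq:local_metric_hessian} into $\mG(\vx)=\int_0^\infty\E\bracket{\paren{\nabla^2\log q_t(\vx+\rvz)}^2}\d t$ with $\rvz\sim\gN(\vzero,t\mI)$, and taking the expectation over $\rvx\sim p_\rvx$ --- using that $\rvx+\rvz\sim q_t$ --- gives, by Fubini, $\E\bracket{\mG(\rvx)}=\int_0^\infty\E_{\rvu\sim q_t}\bracket{\paren{\nabla^2\log q_t(\rvu)}^2}\d t$. The crux is a matrix analogue of de Bruijn's identity, $\tfrac{d}{dt}\E_{q_t}\bracket{\nabla\log q_t\,\nabla\log q_t^\top}=-\E_{q_t}\bracket{\paren{\nabla^2\log q_t}^2}$, which I would establish by differentiating $\int(\nabla q_t\,\nabla q_t^\top)/q_t\,\d\vx$ in $t$, substituting $\partial_t q_t=\tfrac{1}{2}\Delta q_t$, and integrating by parts. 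The time integral then telescopes to $\E\bracket{\mG(\rvx)}=\E_{p_\rvx}\bracket{\nabla\log p_\rvx\,\nabla\log p_\rvx^\top}-\lim_{t\to\infty}\E_{q_t}\bracket{\nabla\log q_t\,\nabla\log q_t^\top}=\E_{p_\rvx}\bracket{\nabla\log p_\rvx\,\nabla\log p_\rvx^\top}$, since the Fisher information of $q_t$ vanishes as $t\to\infty$; together with the standard Fisher-information identity $\E\bracket{\nabla\log p_\rvx\,\nabla\log p_\rvx^\top}=\E\bracket{-\nabla^2\log p_\rvx}$ (which follows from $\int\nabla^2 p_\rvx=\vzero$), this yields \eqref{eq:average_local_metric}.

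The algebra here is mechanical; the real work is analytic. Two points need care: first, for $\Gamma=\infty$, exhibiting a dominating function that justifies interchanging $\vepsilon\to\vzero$ with $\int_0^\infty\E[\cdot]\,\d\gamma$ in the expansion step, which requires controlling $\gamma^2\E\bracket{\paren{\nabla^2\log p_{\rvy_\gamma}}^2}$ both near $\gamma=0$, where it tends to $\mI$, and as $\gamma\to\infty$; second, making the matrix de Bruijn identity rigorous --- in particular, showing the integration-by-parts boundary terms vanish and that $\E_{q_t}\bracket{\nabla\log q_t\,\nabla\log q_t^\top}$ converges to $\E_{p_\rvx}\bracket{\nabla\log p_\rvx\,\nabla\log p_\rvx^\top}$ as $t\to0$ and to $\vzero$ as $t\to\infty$. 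Both hold under standard smoothness and decay assumptions on $p_\rvx$ (for example, finite Fisher information), which the proof should state explicitly. I expect the second point --- the telescoping de Bruijn step --- to be the main obstacle, since it carries essentially all the content of \eqref{eq:average_local_metric}.
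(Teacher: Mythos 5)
Your proposal is correct, and the first two parts essentially coincide with the paper's proof: the Taylor expansion yielding \cref{eq:local_metric_hessian} is identical, and for \cref{eq:local_metric_cov} you reach the same second-order Tweedie identity $\nabla^2\log p_{\rvy_\gamma}(\vy)=\Cov\bracket{\rvx\,|\,\rvy_\gamma=\vy}-\tfrac1\gamma\mI$, merely deriving it from the exponential-family structure of the posterior rather than by differentiating the marginal integral twice as the paper does. Where you genuinely diverge is \cref{eq:average_local_metric}. The paper inserts the second-order expansion of the denoising error into the pointwise I-MMSE decomposition of $\log\paren{p_\rvx(\vx)/p_\rvx(\vx+\vepsilon)}$, reads off a \emph{pointwise} identity expressing $-\nabla^2\log p_\rvx(\vx)$ as $\int_0^\infty\gamma^2\,\E\bracket{\paren{\nabla^2\log p_{\rvy_\gamma}}^2+\nabla^3\log p_{\rvy_\gamma}\paren{\tfrac1\gamma\rvw_\gamma+\nabla\log p_{\rvy_\gamma}}}\d\gamma$, and then kills the third-derivative term after averaging over $p_\rvx$ via Stein's lemma and integration by parts. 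You instead reparameterize to the heat flow $q_t$ with $t=1/\gamma$, average first, and invoke the matrix de Bruijn identity $\tfrac{\d}{\d t}\E_{q_t}\bracket{\nabla\log q_t\,\nabla\log q_t^\top}=-\E_{q_t}\bracket{\paren{\nabla^2\log q_t}^2}$, telescoping the time integral between the Fisher information of $p_\rvx$ and zero. The two arguments rest on the same integration-by-parts manipulations, but yours is more classical and makes the structure of \cref{eq:average_local_metric} transparent (it is literally the fundamental theorem of calculus applied to the matrix Fisher information along the heat flow), at the cost of losing the paper's intermediate pointwise formula for $-\nabla^2\log p_\rvx(\vx)$; your discussion of the regularity needed at $t\to0$ and $t\to\infty$ is apt and is glossed over in the paper. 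One small slip: the displayed rescaling identity should read $\gamma^2\,\nabla^2\log p_{\rvy_\gamma}(\gamma\vx+\rvw_\gamma)=\nabla^2\log q_{1/\gamma}(\vx+\rvw_\gamma/\gamma)$ (the extra $\gamma^{-2}$ belongs to the \emph{squared} version, $\gamma^2\paren{\nabla^2\log p_{\rvy_\gamma}}^2=\gamma^{-2}\paren{\nabla^2\log q_{1/\gamma}}^2$); your subsequent substitution $\gamma^{-2}\d\gamma=-\d t$ and the resulting integral $\int_0^\infty\E\bracket{\paren{\nabla^2\log q_t}^2}\d t$ are nonetheless correct.
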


\Cref{prop:local_metric} gives two equivalent expressions for the local metric induced by the IEM.
In particular, \cref{eq:local_metric_hessian} shows how this local metric, $\mG(\vx,\snr)$, is tied to the local curvature of $\log{p_{\rvx}}$ around the point $\vx$.
Indeed, the Hessian $\smash{\nabla^{2}\log p_{\rvy_{\gamma}}(\gamma\vx+\rvw_{\gamma})}$, which is a (nonlinear) smoothing of $\smash{\nabla^2 \log p_\rvx(\vx)}$, describes the local curvature at blur level $\gamma$.
In fact, the relationship between $\mG(\vx,\snr)$ and $\nabla^2\log p_\rvx(\vx)$ becomes clearer when taking $\snr\rightarrow\infty$ and averaging over $p_\rvx$, as expressed by \cref{eq:average_local_metric}.
Qualitatively, this demonstrates that, locally around the signal $\vx$, the IEM is more sensitive to perturbations $\vepsilon$ that change the log-probability of $\vx$ the most. Note that it is not true in general that $\mG(\vx) = -\nabla^2 \log p_\rvx(\vx)$ pointwise, as the Hessian of the log-density may not be negative semi-definite, whereas $\mG(\vx) \succeq 0$ by construction.
The local metric $\mG(\vx)$ thus acts as a positive semi-definite smoothing of $-\nabla^2 \log p_\rvx(\vx)$.

Furthermore,~\cref{eq:local_metric_cov} relates the local metric $\mG(\vx,\snr)$ to the covariance of $\smash{\rvx|\rvy_\gamma = \gamma\vx + \rvw_\gamma}$, which is compared to $\smash{\mI/\gamma}$---the covariance of rescaled observation noise: $\smash{\vx + \rvw_\gamma/\gamma \sim \mathcal N(\vx, \mI/\gamma)}$.
\Cref{eq:local_metric_cov} therefore provides additional intuition about the behavior of $\mG(\vx,\snr)$.
First, when the noisy observations $\gamma\vx + \rvw_\gamma$ can be effectively denoised across many SNR levels $\gamma$, the posterior covariance for such values of $\gamma$ is substantially smaller than that of the noise, which results in (relatively) high sensitivity to small perturbations of $\vx$.
A simple practical example of this scenario is when $\vx$ is a ``smooth'' signal (\eg, an image of a clear blue sky).
Second, perturbations $\vepsilon$ that can be effectively denoised also lead to large local distance values. For instance, if the density $p_{\rvx}$ is supported on a low-dimensional manifold, then the local metric $\mG(\vx,\snr)$ is more sensitive around points $\vx$ that are near the manifold, and in directions $\vepsilon$ that are orthogonal to the local tangent subspace.

\subsection{Illustrative examples}

\paragraph{Gaussian prior.}
The IEM depends on the distribution of the data $p_\rvx$. When this distribution is Gaussian, $p_{\rvx}=\mathcal N(\vmu, \mSigma)$, and $\smash{\Gamma=\infty}$, the IEM coincides with the well-known Mahalanobis distance (see \cref{appendix:mahalanobis} for proof):
\begin{align}
    \text{IEM}(\vx_{1},\vx_{2}) = \sqrt{\paren{\vx_{1}-\vx_{2}}^{\top}\mSigma^{-1}\paren{\vx_{1}-\vx_{2}}}.
\end{align}
In other words, the IEM is the Euclidean distance after whitening the data: $\vx \mapsto \mSigma^{-\frac12}(\vx-\vmu)$. Displacements in directions of small variance of the data are thus amplified and contribute more to the final distance, as visualized in the center column of \cref{fig:illustration}.

This closed-form expression of the Gaussian IEM comes from the linearity of the corresponding optimal denoisers. While more complicated distributions $p_\rvx$ have non-linear optimal denoisers, they are often \emph{locally} linear \citep{6375938,mohan-kadkhodaie}, so that the corresponding IEM behaves like a Mahalanobis distance \emph{locally}, adapting to the ``local covariance'' of the data.
This is in agreement with our observations above about the local behavior of the IEM for general priors.
Together, they paint a picture of how the IEM adapts to the geometry of the data distribution.

\begin{figure}[t!]
    \centering
    \includegraphics[width=1\textwidth]{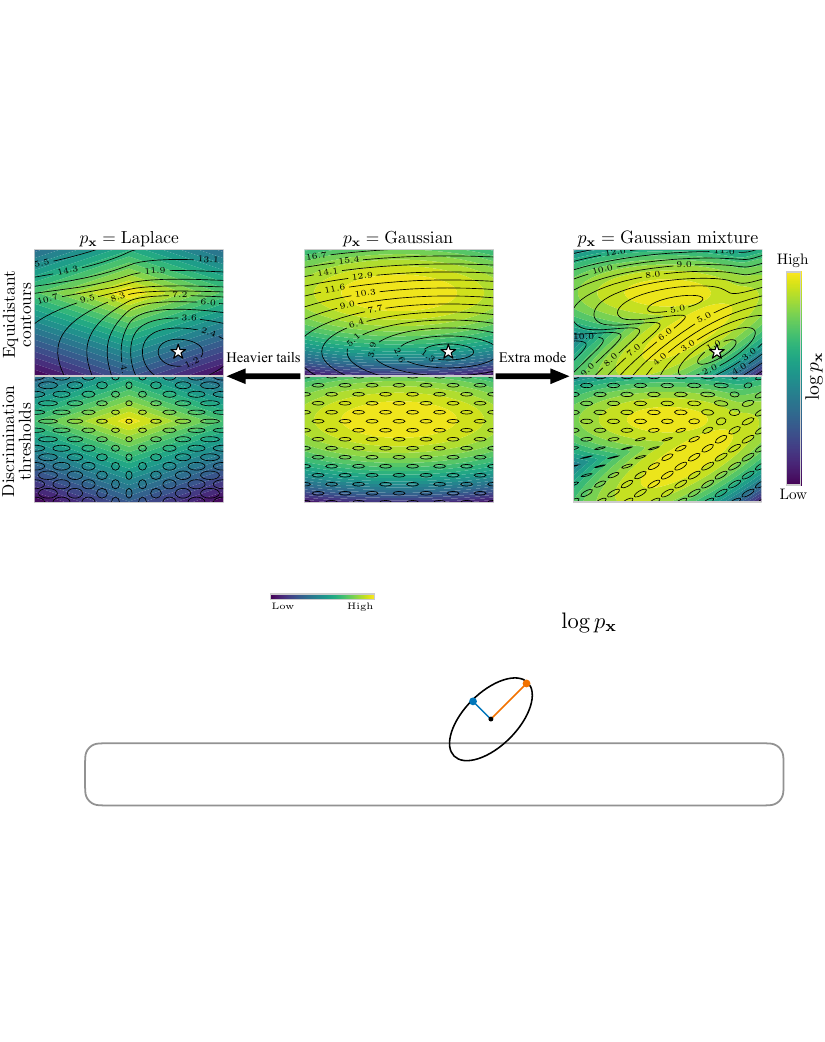}
    \caption{\textbf{Illustrating the global and local geometry of the Information-Estimation Metric (IEM) on three different prior densities.}
    \emph{Top row:}
    Equidistant IEM contours relative to an example reference point (white star).
    When $p_{\rvx}$ is Gaussian (middle column), the IEM coincides with the well-known Mahalanobis distance.
    For a separable Laplacian prior (left column), the equidistant contours cluster and curve around the axes, following the high-probability ridges.
    For a Gaussian mixture prior (right column), the contours reflect the shapes of the modes.
    These examples illustrate how the IEM adapts to the global geometry of the given prior density.
    \emph{Bottom row:}
    Ellipses representing the local discrimination thresholds of the local Riemannian metric $\smash{\mG(\vx,\snr)}$ (\cref{eq:local_metric_hessian}).
    Larger ellipse radii correspond to higher discrimination thresholds, \ie, lower sensitivity to local perturbations.
    For the Gaussian prior, the local metric is constant across the entire domain (identical to the Mahalanobis metric).
    For the Laplace (heavy-tailed) prior, the discrimination thresholds are smaller in high-probability regions---consistent with human perception and predictions of efficient coding theories.
    Moreover, the orientations of the ellipses align with the equiprobable log-density contours, implying that $\mG(\vx,\snr)$ is more sensitive to perturbations that yield a larger change in the probability of $\vx$.
    For the Gaussian mixture density, the discrimination thresholds are smaller between the modes, and the major axes of the ellipses align with the direction of larger local variance.
    Overall, these examples illustrate that $\mG(\vx,\snr)$ is more sensitive in regions of higher log-density curvature and to perturbations that induce larger local changes in probability.}
    \label{fig:illustration}
\end{figure}
Since the IEM coincides with the Mahalanobis distance when $p_{\rvx}$ is Gaussian, it is important to examine the behavior of the IEM when $p_{\rvx}$ is no longer Gaussian.
\paragraph{Gaussian mixture prior.}
First, consider a two-dimensional, two-mode Gaussian mixture model.
To compute the IEM and the local metric $\mG(\vx,\snr)$, we numerically solve the integrals in~\cref{eq:distance-qv,eq:local_metric_hessian}, using closed-form expressions for $\log{p_{\rvy_{\gamma}}}$ and related quantities (see details in~\cref{appendix:illustrative-example}).
To illustrate the global behavior of the IEM, we choose a reference point $\vx_{\text{ref.}}$, and evaluate its distance from each of a uniform grid of points $\vx$.
We then plot the resulting equidistant contours (\cref{fig:illustration}, top row) , and compare with a unimodal Gaussian density to illustrate how the IEM adapts to the prior.
The IEM clearly adapts to the density's global geometry: the equidistant contours resemble the shape of the log-density contours.
Interestingly, the regions delimited by equidistant contours can be disconnected: points belonging to one mode are closer to points belonging to the other mode than they are to points lying in between the modes.
This is because the local curvature of the log-density can be similar in the vicinity of two points, even if their Euclidean distance is large.

Furthermore, we eigendecompose $\mG(\vx,\snr)^{-\frac{1}{2}}$ at each point $\vx$ on the grid, and draw an ellipse centered at $\vx$, whose axes and radii are the resulting eigenvectors and the corresponding eigenvalues, respectively. These ellipses represent the \emph{discrimination thresholds} of the metric across space, which are inversely proportional to its local sensitivities. In other words, the ellipses illustrate the directions that require larger perturbations to induce the same change in distance.
\Cref{fig:illustration} shows that the discrimination thresholds align with the direction of the local covariance, \ie, the metric $\mG(\vx,\snr)$ behaves like a \emph{locally adaptive} Mahalanobis metric.
We also note that $\mG(\vx,\snr)$ is more sensitive in the local minima of probability in between the modes, as illustrated by the ellipses with smaller radii (smaller discrimination thresholds).
This is consistent with \cref{prop:local_metric}, as the signals lying between modes incur large denoising errors due to the uncertainty about the mode they belong to.

\paragraph{Laplace prior.}
To further illustrate the influence of the density’s curvature on the IEM, we now consider the case where $p_\rvx$ is a two-dimensional Laplace distribution, formed by taking the product of one-dimensional Laplace densities.
As shown in \cref{fig:illustration}, this prior density induces discrimination thresholds that increase as they move away from the high-probability ridges that lie along the axes.
From the point of view of \cref{prop:local_metric}, this reflects the fact that the Hessian matrices $\smash{\nabla^{2}\log{p_{\rvy_{\gamma}}}(\gamma\vx+\rvw_{\gamma})}$ (specifically, their negative eigenvalues) decrease in magnitude away from the axes.
For this sparse and heavy-tailed distribution, curvature is correlated with probability, so that discrimination thresholds are larger in low-probability regions, consistent with predictions from prior work on efficient coding \citep{Ganguli2014-au}.
From the global behavior of the distance, we also see that the equidistant contour lines tend to cluster around the axes: under a sparse prior such as the Laplace density, flipping the sign of one or several coordinates of $\vx$ (landing on the other side of the high-probability ridge) incurs a large cost as measured by the IEM.

Additional illustrative examples on one-dimensional prior densities are provided in \cref{appendix:illustrative-example}.

\subsection{Generalized Information-Estimation Metric}\label{sec:qv_functional}
The IEM is defined as the expected quadratic variation of the It\^{o} process
\begin{align}
\rvz_{\gamma}(\vx_{1},\vx_{2})\coloneqq\log{\paren{\frac{p_{\rvy_{\gamma}}(\gamma\vx_{1}+\rvw_{\gamma})}{p_{\rvy_{\gamma}}(\gamma\vx_{2}+\rvw_{\gamma})}}}.
\end{align}
Note that the quadratic variation of $\rvz_{\gamma}$ is unaffected by additive \emph{shifts} of the process (the drift coefficient is ignored).
Namely, $\rvz_{\gamma}$ may have small or large average values while yielding the same quadratic variation.
It is therefore interesting to take such shifts into account by quantifying the deviation of $\rvz_{\gamma}$ from zero.
A natural way to achieve this is to measure the quadratic variation of some scalar function $f(\rvz_{\gamma},\gamma)$ that increases with $\lvert\rvz_{\gamma}\rvert$, thereby generalizing the IEM.
When $f$ is twice differentiable, It\^{o}'s lemma shows that the diffusion coefficient of the process $f(\rvz_{\gamma},\gamma)$ equals that of $\rvz_{\gamma}$ multiplied by $f'(\rvz_{\gamma},\gamma)$---the derivative of $f(\cdot,\gamma)$ w.r.t.\ the first argument. We thus define:
\begin{definition}\label{eq:qv_distance_f}
For any twice differentiable scalar function $f$, the generalized $\textnormal{IEM}$ is defined as
\begin{align}
    \boxed{
    \textnormal{IEM}_{f}(\vx_{1},\vx_{2},\snr)\!\coloneqq\!\paren{\!\int_{0}^{\snr}\!\E\bracket{f'\paren{\rvz_{\gamma},\gamma}^{2}\norm{\nabla\log{p_{\rvy_{\gamma}}(\gamma\vx_{1}+\rvw_{\gamma})}-\nabla\log{p_{\rvy_{\gamma}}(\gamma\vx_{2}+\rvw_{\gamma})}}^{2}}\d\gamma\!}^{\!\!\frac{1}{2}}
    }
    \nonumber
\end{align}
where the expectations are taken over $p_{\rvw_\gamma}$ for each $\gamma$.
\end{definition}
For $f(\rvz_{\gamma},\gamma)=\rvz_{\gamma}$, $\text{IEM}_{f}$ recovers the IEM from~\cref{eq:distance-qv}.
Moreover, when $\smash{f(\rvz_{\gamma},\gamma)}$ satisfies $\smash{f'(\rvz_{\gamma},\gamma)=0}$ if and only if $\smash{\rvz_{\gamma}=0}$, we have $\smash{\text{IEM}_{f}(\vx_{1},\vx_{2})=0}$ if and only if $\smash{\vx_{1}=\vx_{2}}$ (positive definiteness).
Unlike the IEM, however, the $\text{IEM}_{f}$ is generally not a proper metric, as it may violate the symmetry or the triangle inequality axioms. This may or may not be considered a limitation, depending on the intended application of the distance.
\Cref{eq:qv_distance_f} can be extended to any \emph{non-anticipative} functional $\smash{f(\{\rvz_{\gamma'}\}_{\gamma'=0}^{\gamma},\gamma)}$ whose input is the entire history of the process $\rvz_{\gamma'}$ up to SNR $\gamma$.
In this case, $f'$ is a Dupire derivative \citep{dupire_functional_2009,cont:hal-00471318}.

In~\cref{appendix:additional_properties}, we establish two important properties of the process $\rvz_{\gamma}$, which are inherited by the family of IEMs. Specifically, we show that $\rvz_{\gamma}$ is invariant under Euclidean isometries, \ie, it is invariant to the choice of orthonormal coordinate system. Moreover, $\rvz_{\gamma}$ is invariant to sufficient statistics of $\rvy_{\gamma}$, a property that the IEMs share with the Fisher information metric~\citep{chentsov1982statistical}.

\section{Experiments}\label{sec:experiments}

We assess how well our proposed distances predict human judgments of similarity between photographic images.
Specifically, we evaluate the IEMs on pairs of images taken from databases of psychophysical experiments, and compare the predicted distances with human similarity ratings.
Computing the IEMs requires access to the score function $\nabla\log{p_{\rvy_{\gamma}}}$, or equivalently to an MMSE estimator $\E[\rvx\,|\,\rvy_{\gamma}]$, at each SNR level $\gamma$.
We approximate this estimator with a learned neural denoiser $D_{\theta}(\rvy_{\gamma},\gamma)$, which is trained to predict $\rvx$ from $(\rvy_{\gamma},\gamma)$ by minimizing MSE (similarly to \emph{unconditional} diffusion models).
To evaluate our distance functions, we plug the trained denoiser into \cref{eq:distance-qv,eq:qv_distance_f} and solve the integral numerically (see~\cref{appendix:numerical_integration} for more details).

\subsection{Implementation}
\paragraph{Neural denoiser architecture.}
We use the Hourglass Diffusion Transformer (HDiT)~\citep{hdit} as our denoiser model because it can be trained efficiently and scales linearly with image resolution.
We train a denoiser model from scratch on the ImageNet-1k~\citep{5206848} dataset, cropping the images to size $256\times 256$.
We follow most of the implementation choices of~\citet{hdit}, but use significantly smaller models and a log-uniform schedule for the noise level.
Additional training details and hyperparameters are disclosed in~\cref{appendix:training-details-hdit}.

\paragraph{Choosing $f$.}
The generalized $\text{IEM}_{f}$ (\cref{eq:qv_distance_f}) depends on the choice of the scalar function $f$, so we examine three options:
    (1) \emph{Identity function:} Setting $f(\rvz_{\gamma}, \gamma) = \rvz_\gamma$ corresponds to our first IEM distance (\cref{eq:distance-qv}).
    (2) \emph{Quadratic function:}
    We take $f(\rvz_{\gamma}, \gamma) = \rvz_{\gamma}^{2}$ and denote by $\text{IEM}_{\text{sq.}}$ the resulting distance. We find this simple choice sufficient to demonstrate that $\text{IEM}_{f}$ can adapt to different types of human data by selecting an appropriate function $f$, without supervision.
    (3) \emph{Learned function:} We consider learning a parameterized function $f'_{\omega}$ from labeled data. The purpose of this choice is to assess whether our proposed family of distances can match human perception across several kinds of psychophysical experiments simultaneously. This is a challenging problem, since the distance must adapt to both ``local'' distortions near the visual sensitivity threshold (\eg, small additive noise) and ``global'' distortions (\eg, images containing similar-looking textures). Moreover, this choice provides a fairer comparison with competing methods, all of which are supervised algorithms. We implement $f'_{\omega}$ as a simple \emph{causal} (non-anticipative) fully-connected network, where the output at SNR $\gamma$ depends on all previous samples $\smash{\{|\rvz_{\gamma'}|\}_{\gamma'=0}^{\gamma}}$. In all experiments, $\smash{f'_{\omega}}$ is trained on data disjoint from the evaluation data. See~\cref{appendix:f_omega_details} for more details about learning this function.

\begin{figure}[t!]
    \centering
    \includegraphics[width=1\textwidth]{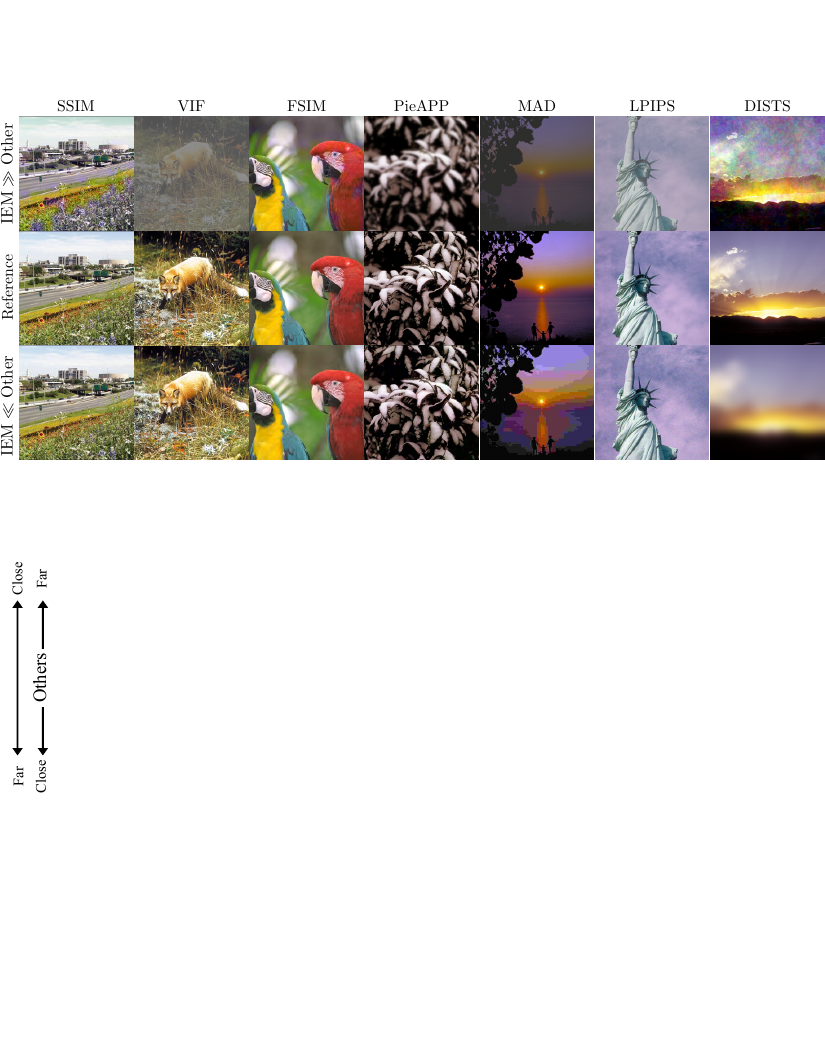}
    \caption{\textbf{Illustrating the disagreement between different types of perceptual distance measures.} We ranked the distorted images associated with each reference image in the LIVE and CSIQ databases (middle row), according to the IEM and several other metrics. Each column displays the distorted images with the largest positive (bottom row) or negative (top row) rank differences between the IEM and the compared metric (denoted in the title of the column).}
    \label{fig:visual_discrimination}
\end{figure}
\begin{figure}[t!]
    \centering
    \includegraphics[width=\linewidth]{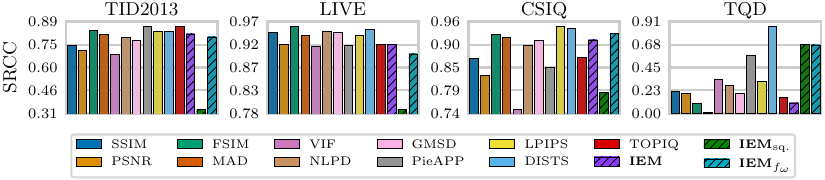}
    \caption{\textbf{Spearman's rank correlation coefficient (SRCC) results on full-reference image similarity benchmarks.} On TID2013, LIVE, and CSIQ, the IEM performs competitively with previous state-of-the-art supervised methods, but struggles on TQD (texture similarity data), as do most methods. In contrast, the unsupervised $\text{IEM}_{\text{sq.}}$ performs surprisingly well on TQD.
    Our supervised variant, which only learns $f_{\omega}$, achieves strong results on both types of databases simultaneously.}
    \label{fig:classic_similarity}
\end{figure}

\subsection{Predicting mean opinion scores}\label{sec:full-reference-classic}
We evaluate our solutions using several full-reference image quality assessment databases containing mean opinion scores (MOS).
In \cref{appendix:bapps} we report additional experiments on BAPPS \citep{8578166}---a different type of database consisting of two-alternative forced choice (2AFC) rankings.
\paragraph{Common benchmarks.}
We consider several standard full-reference image quality assessment benchmarks, including TID2013~\citep{PONOMARENKO201557}, CSIQ~\citep{10.1117/1.3267105}, and LIVE~\citep{sheikh2006live}.
Since the learned denoiser model is suited for images of size $256\times 256$, we adjust the resolution of the images in the considered databases by first center-cropping each image to the length of its shorter edge, and then resizing it to $256 \times 256$.
We compare against PSNR, SSIM~\citep{wang2004image}, VIF~\citep{1576816}, MAD~\citep{10.1117/1.3267105}, FSIM~\citep{5705575}, GMSD~\citep{6678238}, NLPD~\citep{doi:10.2352/ISSN.2470-1173.2016.16.HVEI-103}, PieAPP~\citep{8578292}, LPIPS~\citep{8578166}, DISTS~\citep{ding2020iqa}, and TOPIQ~\citep{10478301}.
We find that the IEM with $\Gamma = 1/4$ yields surprisingly strong results, even though it is computed solely based on denoising errors and is not exposed to human labels.
Indeed, as shown in~\cref{fig:classic_similarity}, this same choice of $\Gamma$ produces a strikingly high Spearman's rank correlation coefficient (SRCC) with the human MOS across all of the aforementioned datasets. Additional performance measures demonstrate similar trends, so we report them in~\cref{fig:classic_similarity_plcc,fig:classic_similarity_krcc} in \cref{app:classic_similarity_additional}.

To illustrate the differences between the IEM and the compared distance measures, we present in~\cref{fig:visual_discrimination} several example images for which the IEM rankings differ the most from those of the compared methods.
For each reference image in the dataset, we rank all of its distorted counterparts according to each distance measure. We then compute the difference between the ranks assigned by the IEM and those assigned by each compared method. From these differences, we take the maximum and minimum values, and sum their absolute magnitudes to quantify the degree of disagreement. Finally, we display the reference and distorted images that achieve the largest disagreement.
This systematic procedure for comparing image similarity models on a given dataset is \emph{analogous} to the maximum differentiation competition~\citep{Wang2008-wi}.
The results show that VIF, FSIM, PieAPP, and LPIPS can assign smaller distances to image pairs that are perceptually distinguishable, whereas the IEM correctly detects that the images are different.
In comparison, MAD and DISTS tend to disagree with the IEM in cases involving perceptually noticeable distortions.
For example, DISTS appears to favor noise over blur, whereas the IEM shows the opposite preference.

\paragraph{Texture images.}
We further evaluate our distance measures on the TQD textures dataset~\citep{ding2020iqa}. Unlike the previously considered benchmarks, which contain general natural images, this dataset consists of texture images (\eg, leaves or brick walls), paired both with visually similar textures and with distorted versions of the same texture (\eg, Gaussian blur, JPEG compression). In this setting, human observers are expected to judge two images of the same texture as more similar to each other than a clean and distorted pair. Thus, the TQD benchmark assesses whether a perceptual distance measure produces scores consistent with human perception even when the compared images are substantially different in terms of their Euclidean distance. As in the previous experiments, we use the $256 \times 256$ denoiser model with the same preprocessing to resize the images.

As shown in \cref{fig:classic_similarity}, we find that our distance $\text{IEM}_{\text{sq.}}$ with $\Gamma=10^{6}$ outperforms all other methods, except for DISTS, which was explicitly designed to handle texture images. However, $\text{IEM}_{\text{sq.}}$ does not perform well on the TID2013, LIVE, and CSIQ datasets. This highlights the flexibility of the $\text{IEM}_{f}$ to accommodate very different types of distortions by choosing $f$. 
An important question, then, is whether a single mapping $f$ can realize both types of functions. The answer is positive: our learned distance $\text{IEM}_{f_{\omega}}$ achieves strong performance across all datasets simultaneously, indicating the significance of $f$ and the flexibility of our distances in practice.

\subsection{Maximum differentiation competition against the PSNR measure}\label{section:mad}
\begin{figure}[t!]
    \centering
    \includegraphics[width=1\textwidth]{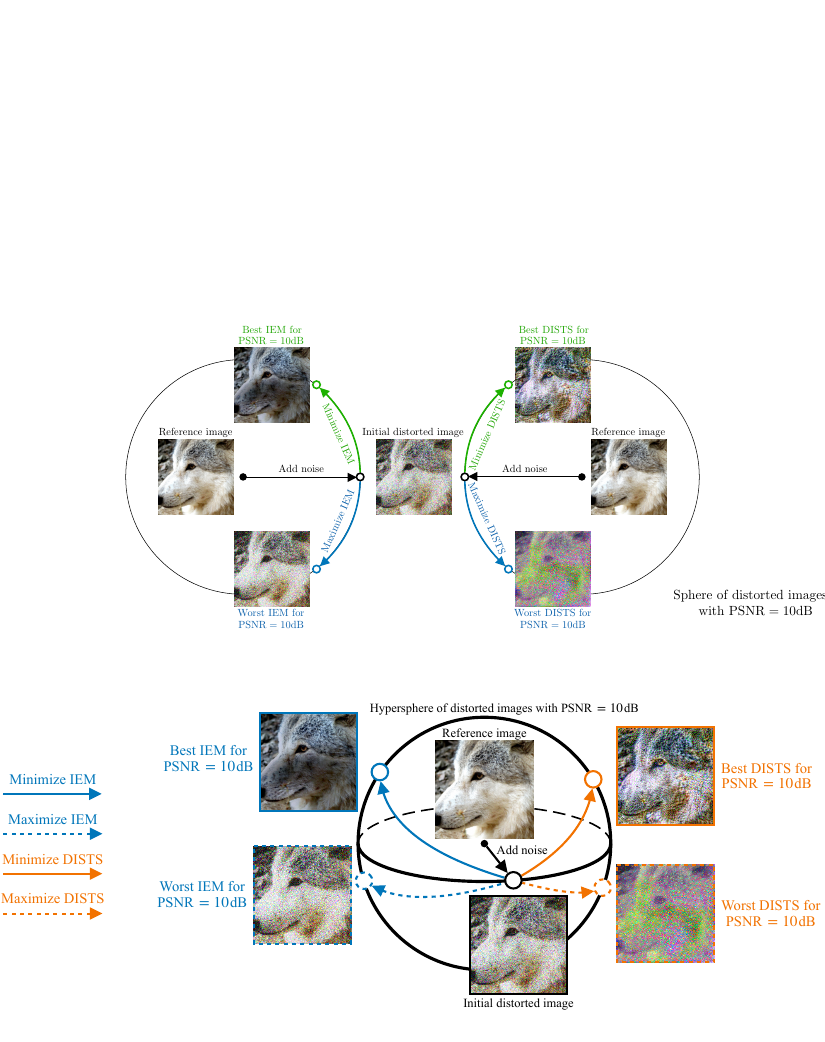}
    \caption{\textbf{Visual illustration of the maximum differentiation competition.} We corrupt a given reference image with random noise to produce a distorted image with $\text{PSNR} = 10\text{dB}$. This distorted image lies on the surface of a hypersphere in $\mathbb{R}^d$ centered at the reference image, with radius equal to the Euclidean norm of the added noise, as illustrated above. Starting from the distorted image, we then minimize or maximize the perceptual distance to the reference image while constraining the PSNR to $10\text{dB}$. Even under such a restrictive constraint, minimizing the IEM yields artifact-free images that are perceptually similar to the reference image, whereas minimizing state-of-the-art supervised perceptual metrics such as DISTS yields unrealistic images with noticeable artifacts.
    Furthermore, the results obtained by maximizing the IEM support our theoretical analysis in~\cref{sec:approach}, demonstrating that the IEM is most sensitive to unstructured distortions (\eg, additive noise) that perturb the reference image outside the ``data support.''
    }
    \label{fig:mad_sphere_fig}
    \vspace{-0.4em}
\end{figure}
To further demonstrate the behavior of the IEM, we conduct a maximum differentiation competition against PSNR.
Specifically, we minimize or maximize each perceptual distance (IEM, DISTS, etc.)\ via \emph{projected} gradient descent or ascent, respectively, where the projection step during optimization constrains the PSNR of the distorted image to a fixed, pre-determined level.
\Cref{fig:mad_sphere_fig} illustrates this experiment and compares the optimized images produced by the IEM and DISTS, using a PSNR constraint of $10\text{dB}$.
Comparisons with additional perceptual distance measures on varying levels of PSNR constraints are shown in \cref{fig:mad_wolf,fig:mad_mushroom,fig:mad_penguin,fig:mad_climber}.
Implementation details are disclosed in~\cref{appendix:mad}.

Interestingly, we find that minimizing the IEM consistently yields high perceptual quality images that preserve the overall geometric structure of the reference image, even under very low PSNR constraints (see \cref{fig:mad_sphere_fig,fig:mad_wolf,fig:mad_mushroom,fig:mad_penguin,fig:mad_climber}).
In contrast, minimizing other perceptual metrics produces images with unnatural artifacts, even for relatively high PSNR constraints (\eg, $25\text{dB}$).
These results suggest that, unlike previous perceptual metrics, the IEM may serve as a stand-alone robust optimization objective (\eg, for solving inverse problems), and we encourage future work to explore this potential.
Furthermore, maximizing the IEM reveals that it is most sensitive to unstructured noise perturbations that push an image off the ``data support'' (decreasing the image's probability), consistent with our theoretical analysis in \cref{sec:approach}.

\section{Discussion}\label{sec:discussion}
We have introduced the Information-Estimation Metric (IEM), a novel form of distance measure induced by the geometry of an underlying probability density, and provided a means of learning this (unsupervised) metric from samples.
The definition of the IEM relies on fundamental principles that link the probability density with its local geometry, namely the pointwise I-MMSE and Tweedie--Miyasawa formulas.
This relationship between probability and geometry arises from the choice of an estimation problem, in our case Gaussian denoising.
Different choices of the estimation problem may yield different types of IEMs.
For instance, it is possible to define an IEM using the pointwise I-MMSE relation for Poisson channels \citep{pointwise-relations-poisson}, and the empirical Bayes relation for Poisson denoising \citep{Raphan11}.
We leave these as opportunities for future work.
Furthermore, the IEM does not assume that the density is supported on a low-dimensional manifold, in contrast to manifold learning approaches (see \cref{appendix:related} for further discussion).
In fact, the IEM is well-defined for any valid probability density.
We proved that the IEM is a valid distance metric and analyzed its local and global properties through both theoretical results and illustrative examples.
To demonstrate the value of our proposed framework, we trained an IEM on the ImageNet database and found that it aligns surprisingly well with human judgments of image similarity. 

The proposed IEMs (\cref{eq:distance-qv,eq:qv_distance_f}) require choosing a scalar hyperparameter $\Gamma$.
This hyperparameter sets the maximum SNR over which the integral is computed, effectively controlling the finest resolution at which the metric is adapted to the density. Such a hyperparameter should presumably be chosen based on the fine-scale geometry of the density, or for a learned density, the complexity and size of the training set.
Moreover, the generalized IEM (\cref{eq:qv_distance_f}) depends on the choice of the function $f$, which qualitatively controls the relative importance of log-probability ratio values compared to score differences.
A systematic principle for determining both $\snr$ and $f$ remains an open problem.
Perhaps the most important limitation of the IEM is its computational cost: Numerical estimation of the integral is more computationally demanding than evaluation of existing supervised perceptual metrics (\eg, LPIPS or DISTS). This is acceptable for applications to collections of images (as in our comparison to human perceptual data), but would limit its use as an optimization objective (\eg, for solving inverse problems, or optimizing compression systems). 
We believe that the IEM may be evaluated in a single forward pass, \eg, using a strategy similar to \citet{guth2025learningnormalizedimagedensities}.

There are many potential future applications for the IEM framework.
For example, it offers new opportunities for unsupervised data clustering (as illustrated in \cref{appendix:clustering}), information retrieval, evaluating (or optimizing) image restoration and compression engines, and discriminating between generative models (using the mismatched IEM proposed in~\cref{appendix:relation_to_kl}).
It is also natural to consider applying the principles presented in this paper to other forms of continuous signals, such as audio.
\section*{Reproducibility statement}
\Cref{sec:experiments,appendix:training-details,appendix:additional_exp_results} provide all the details necessary to reproduce our results, including the training hyperparameters of the denoiser model used in the computation of the IEM, the implementation details of the learned function $f_{\omega}$, the data preprocessing procedures, and the maximum differentiation competition experiments.
Our code is available online at \url{https://github.com/ohayonguy/information-estimation-metric}.

\subsubsection*{Acknowledgments}
We thank our colleagues from the Center for Computational Neuroscience at the Flatiron Institute for their helpful comments and suggestions.
G.O. gratefully acknowledges the Viterbi Fellowship from the Faculty of Electrical and Computer Engineering at the Technion.

\bibliography{iclr2026_conference}
\bibliographystyle{iclr2026_conference}

\appendix
\newpage
{
    \hypersetup{linkcolor=black}
    \startcontents
    \printcontents{l}{1}[2]{\part*{Appendix}}
}

\section{Related work and background}

\subsection{Related work}\label{appendix:related}
\paragraph{Metric learning.}
Metric learning methods aim to learn a distance or similarity function from data that is suited to a particular clustering task (\eg, face verification or recognition, image or text retrieval), such that points from the same cluster are considered closer to each other than points from different clusters \citep{xing2002distance, 8186753}. Thus, in metric learning, the metric is, by design, informed either by a specific downstream task or by some other specification of the desired clustering structure within the signal domain.
In contrast, the IEM we introduce in this paper is not informed by any downstream task, but is rather derived directly from the probability distribution of the (unlabeled) data.
Several approaches in the literature rely on self-supervised learned representations, such as the latent space of generative models, in order to induce a Riemannian metric in the input space~\citep{arvanitidis2017latent}.
In contrast, our approach does not rely on an explicit latent space and does not require computing geodesic distances.

Another path to metric learning is through dimensionality reduction, where the metric is derived by measuring distances (\eg, Euclidean) in a low-dimensional embedding space.
For example, diffusion maps~\citep{coifman2006diffusion} aim to reveal the manifold structure of data.
Given a similarity matrix and its corresponding graph, this method characterizes the geometry of the underlying manifold by simulating random walks.
Computing the eigenvectors of the corresponding diffusion operator provides coordinates for embedding data into a lower-dimensional space, where Euclidean distances correspond to distances along the manifold and are referred to as ``diffusion distances.''
However, constructing the similarity matrix requires the choice of a kernel in data space, and the diffusion ``time'' needs to be carefully calibrated~\citep{shan2022diffusion}.
In contrast, the IEM does not make a manifold assumption and instead relies on denoising, which corresponds to reversing a diffusion process in the full signal space.

\paragraph{Defining a Riemannian metric through the score of the blurred density.}
Recent studies proposed different ways to define a Riemannian metric using the score of the blurred density (\ie, the density of noise-corrupted signals) \citep{saito2025image,azeglio2025whatsinsidediffusionmodel}.
These Riemannian metrics are then used to compute geodesics, which can, \eg, be applied to interpolate between a given pair of images.
Such approaches differ from our proposed IEM framework in several important ways.
First, the IEM is, by definition, a global distance function from which we derive a local Riemannian metric, rather than the other way around.
In fact, the IEM is not the geodesic distance corresponding to the associated Riemannian metric we derive (\cref{eq:local_metric_hessian}).
Second, the Riemannian metrics proposed in \citep{saito2025image,azeglio2025whatsinsidediffusionmodel} are defined based on the score at a single blur level $\gamma$, whereas the IEM (and its associated Riemannian metric) integrates over a range of blur levels $\gamma \in [0,\snr]$.
Third, the IEM is constructed from first principles and satisfies several important properties, \eg, it reduces to the well-known Mahalanobis distance when the prior density is Gaussian.

\paragraph{Information geometry.}
Information geometry \citep{amari2000methods} uses tools from differential geometry to analyze statistical models.
In particular, it considers probability distributions as points lying on a Riemannian manifold whose metric is derived from the KL divergence.
In the case of a family of conditional distributions $\{p_{\rvy|\rvx}(\cdot|\vx)\,|\, \vx \in \R^d\}$,  this metric is given by the Fisher information matrix \citep{380f7170-a649-307c-9495-f3b3298846ff}, which quantifies the amount of information that $\rvy$ carries about $\rvx$ (here, $\rvx$ is considered as an unknown ``parameter'').
Specifically, the Fisher information matrix is defined as
\begin{align}
    \gI(\vx)= \E\bracket{\nabla_{\vx} \log{p_{\rvy|\rvx}}(\rvy|\vx)\nabla_{\vx} \log{p_{\rvy|\rvx}}(\rvy|\vx)^{\top}},
\end{align}
where the expectation is taken over $p_{\rvy|\rvx}(\cdot|\vx)$.
This metric can be used to define a \emph{geodesic} distance in the domain of $p_{\rvx}$ (although different $\vx$'s can also be compared directly with the KL divergence between the associated conditional distributions).
Note that the Fisher information metric is derived solely from the given observation model, namely the \emph{representation}, $p_{\rvy|\rvx}$, which can be completely unrelated to the prior $p_{\rvx}$ (or otherwise, one has to specify explicitly how $p_{\rvy|\rvx}$ depends on $p_\rvx$).
Our approach departs from this classical framework in two ways (although there are qualitative analogies, see paragraph on invariance to sufficient statistics in \cref{appendix:additional_properties}).
First, the IEM depends directly on the prior $p_\rvx$, rather than through a potentially prior-dependent observation model $p_{\rvy|\rvx}$.
Second, the IEM is a \emph{global} distance metric from which we derive the local Riemannian metric $\mG$ (\cref{eq:local_metric_hessian,eq:local_metric_cov}), but this is a one-way relationship: the IEM is not a geodesic distance.

\paragraph{Deriving a metric from a prior using Jeffreys rule.}
Solving a Bayesian inference problem requires both a likelihood function $p_{\rvy|\rvx}$ and a prior $p_{\rvx}$.
However, in some cases only the likelihood is available. In such situations, it is natural to choose a non-informative prior using Jeffreys rule~\citep{jeffreys1946invariant}, which states that the prior should be proportional to the square root of the determinant of the Fisher information matrix.
More relevant to our case, this relationship has also been applied in the reverse direction, where the prior is known but the likelihood is not. In particular, it has been shown that a likelihood for which the Jeffreys prior matches the data distribution satisfies the principles of efficient coding~\citep{Ganguli2014-au}.
The reverse use of Jeffreys prior has also been explored in machine learning. For example, \citet{lebanon2002learning} considered a Riemannian metric under which the data is uniformly distributed. Since the prior density is a scalar function, they assumed an isotropic metric of the form $\smash{\mM(\vx) \propto \lambda(\vx)\mI}$, where $\smash{\lambda(\vx) \propto p(\vx)^{2 / d}}$, and $\vx \in \mathbb{R}^d$.
In both of these cases, the result depends only on a scalar quantity and therefore cannot account for the varying magnitudes of discrimination thresholds in different perturbation directions~\citep{berardino2017eigen}.
In contrast, the IEM builds on the local geometry of the data distribution to define a distance that captures its anisotropic structure.

\subsection{Origins of the pointwise I-MMSE formula}\label{appendix:i-mmse-review}
\paragraph{I-MMSE.}
The I-MMSE relation~\citep{immse}, which is closely related to de Bruijn's identity from the 1950s~\citep{STAM1959101}, is a fundamental connection between information theory and estimation theory for Gaussian noise channels.
Specifically, the I-MMSE formula relates the mutual information between $\rvx$ and $\rvy_{\gamma}$ to the \emph{integrated} MMSE achievable when estimating $\rvx$ from the noisy channel.
Formally, letting $I(\rvx,\rvy_{\gamma})$ denote the mutual information between $\rvx$ and $\rvy_{\gamma}$, the I-MMSE formula \citep{immse} states that
\begin{align}
    I(\rvx,\rvy_{\snr})\coloneqq\E\left[\log{\paren{\frac{p_{\rvy_{\snr}|\rvx}(\rvy_{\snr}|\rvx)}{p_{\rvy_{\snr}}(\rvy_{\snr})}}}\right]=\frac{1}{2}\int_{0}^{\snr}\E\bracket{\norm{\rvx-\E\bracket{\rvx|\rvy_{\gamma}}}^{2}}\d \gamma,\label{eq:i-mmse}
\end{align}
where the expectation on the left-hand side is taken over the joint distribution $p_{\rvx,\rvy_{\snr}}$, while on the right-hand side it is taken over $p_{\rvx,\rvy_{\gamma}}$ for each $\gamma$.
The result above holds for any Gaussian channel with SNR $\gamma$, and not only for the channel defined in~\cref{eq:gaussian_channel}.

\paragraph{Pointwise I-MMSE.}By interchanging the order of expectation and integration on the right-hand side of \cref{eq:i-mmse}, we obtain
\begin{align}
    \E\left[\log{\paren{\frac{p_{\rvy_{\snr}|\rvx}(\rvy_{\snr}|\rvx)}{p_{\rvy_{\snr}}(\rvy_{\snr})}}}\right]=\E\bracket{\frac{1}{2}\int_{0}^{\snr}\norm{\rvx-\E\bracket{\rvx|\rvy_{\gamma}}}^{2}\d\gamma}.\label{eq:exp-i-mmse}
\end{align}
This reformulation highlights that the two sides of the I-MMSE formula correspond to random variables that are equal in expectation.
\citet{pointwise-relations} showed that these random variables satisfy the \emph{pointwise} I-MMSE formula
\begin{align}
    \log{\paren{\frac{p_{\rvy_{\snr}|\rvx}(\rvy_{\snr}|\rvx)}{p_{\rvy_{\snr}}(\rvy_{\snr})}}}=\int_{0}^{\snr}\paren{\rvx-\E\bracket{
\rvx|\rvy_{\gamma}}}\cdot \d\rvw_{\gamma}+\frac{1}{2}\int_{0}^{\snr}\norm{
\rvx-\E\bracket{
\rvx|\rvy_{\gamma}}}^{2}\d\gamma,\label{eq:pointwise-i-mmse-appendix}
\end{align}
where this equality holds with probability one (i.e., almost surely).
As noted by~\citet{pointwise-relations}, taking expectations in \cref{eq:pointwise-i-mmse} immediately recovers the original I–MMSE formula. Indeed, the stochastic integral on the right-hand side of \cref{eq:pointwise-i-mmse} is a martingale with zero mean, while the left-hand side corresponds to the pointwise mutual information between $\rvx$ and $\rvy_{\gamma}$, whose expectation yields the mutual information between these two random vectors.
Using the fact that
\begin{align}
    p_{\rvy_{\snr}|\rvx}(\rvy_{\snr}|\rvx)=p_{\rvw_{\snr}}(\rvw_{\snr}),
\end{align}
it is straightforward to see that the pointwise I-MMSE equation in~\cref{eq:pointwise-i-mmse-appendix} is equivalent to~\cref{eq:pointwise-i-mmse}.

\section{Mismatched IEM and local decomposition of Kullback--Leibler divergence}
\label{appendix:relation_to_kl}

Here, we show that the IEM compares the local behavior of the density in a way that resembles a ``local KL divergence.''
This is formalized through a direct relationship between the average squared IEM between a point $\rvx$ and its additive perturbation $\tilde\rvx = \rvx - \vs$ in a fixed direction $\vs$, and the KL divergence between the distributions $p_\rvx$ and $p_{\tilde \rvx}$.
To motivate this relationship, it is helpful to first introduce a generalization of the IEM.

\paragraph{Information-Estimation Metric between samples from different distributions.}
While the IEM from~\cref{sec:approach} is a distance function associated with a single distribution $p_\rvx$, it can be straightforwardly generalized to the case where $\vx_1$ and $\vx_2$ are assumed to come from two different distributions $p_{\rvx_1}$ and $p_{\rvx_2}$, respectively.
We refer to this generalization as the \emph{mismatched IEM}, analogously to the term ``mismatched estimation'' that appears in the information-estimation relations literature~\citep{guo2013interplay}.

Our construction mirrors that of \cref{sec:approach}.
We define the mismatched IEM as the expected quadratic variation of the log-probability ratio $\log\paren{p_{\rvy_{1,\gamma}}(\gamma\vx_1 +\rvw_\gamma) / p_{\rvy_{2,\gamma}}(\gamma\vx_2 + \rvw_\gamma)}$, where $p_{\rvy_{i,\gamma}}$ is the distribution of $\rvy_{i,\gamma} = \gamma \rvx_i + \rvw_\gamma$. 
This log-probability ratio involves taking the difference between two denoisers corresponding to the two priors $p_{\rvx_1}$ and $p_{\rvx_2}$.
Similarly to~\cref{sec:approach}, these denoising errors can be expressed in terms of the gradients of $\log{p_{\rvy_{i,\gamma}}}$. Specifically,
\begin{definition}\label{eq:distance-qv-mismatched}
The \emph{mismatched IEM} induced by the densities $p_{\rvx_1}$ and $p_{\rvx_2}$ is defined as
\begin{align}
    \textnormal{IEM}_{p_{\rvx_1}, p_{\rvx_2}}(\vx_{1}, \vx_{2}, \Gamma) \coloneqq \paren{\int_{0}^{\snr}\!\E\bracket{\norm{\nabla\log{p_{\rvy_{1,\gamma}}(\gamma\vx_{1}+\rvw_{\gamma})}-\nabla\log{p_{\rvy_{2,\gamma}}(\gamma\vx_{2}+\rvw_{\gamma})}}^{2}}\d\gamma}^{\frac{1}{2}} \nonumber
\end{align}
where the expectations are taken over $p_{\rvw_\gamma}$ for each $\gamma$. 
\end{definition}
Intuitively, the mismatched IEM compares the local geometry of $\log{p_{\rvx_1}}$ in the vicinity of $\vx_1$ with the local geometry of $\log{p_{\rvx_2}}$ in the vicinity of $\vx_2$.
When $p_{\rvx_{1}}=p_{\rvx_{2}}$, we trivially recover the IEM given in \cref{eq:distance-qv}.

\paragraph{Relation to Kullback--Leibler divergence.}
The mismatched IEM distance is directly related to the KL divergence between $p_{\rvx_1}$ and $p_{\rvx_2}$. Indeed, \citet{pointwise-relations} showed that this KL divergence can be expressed as the expected quadratic variation of the log-probability ratio when the two distributions are evaluated at the \emph{same} point, $\log\paren{p_{\rvy_{1,\gamma}}(\gamma\rvx_1 +\rvw_\gamma) / p_{\rvy_{2,\gamma}}(\gamma\rvx_1 + \rvw_\gamma)}$, yielding
\begin{align}
    \KL(p_{\rvx_1} \,\|\, p_{\rvx_2}) = \frac{1}{2} \int_0^\infty \E\bracket{\norm{\nabla \log{p_{\rvy_{1,\gamma}}(\gamma\rvx_1 + \rvw_{\gamma})} - \nabla \log{p_{\rvy_{2,\gamma}}(\gamma\rvx_1 + \rvw_{\gamma})}}^2} \d\gamma, 
\end{align}
where the average is taken over $p_{\rvx_1}p_{\rvw_{\gamma}}$.
This result also appeared in \citet{verdu2010mismatched}.
It immediately follows that the KL divergence can be expressed as the average mismatched IEM with $\Gamma = \infty$, as follows:
\begin{align}
    \KL(p_{\rvx_1} \,\|\, p_{\rvx_2})=\frac{1}{2}\E\bracket{\text{IEM}_{p_{\rvx_1},p_{\rvx_2}}^{2}(\rvx_1,\rvx_1)}.
    \label{eq:average_iem_kl}
\end{align}
Generalizing the above for $\snr<\infty$ is trivial, yielding a KL divergence between the blurred versions of $p_{\rvx_{1}}$ and $p_{\rvx_{2}}$.
\Cref{eq:average_iem_kl} allows us to interpret the mismatched IEM between the \emph{same} points seen as samples coming from two \emph{different} distributions, as a \emph{local decomposition} of the KL divergence between these two distributions.
It formalizes the intuition that the (mismatched) IEM compares the local behavior of two (potentially different) densities around the two points, since the average over $p_{\rvx_1}$ yields a global comparison (given by the KL divergence).
Note that $\log(p_{\rvx_1}(\rvx_1)/p_{\rvx_2}(\rvx_1))$ also qualifies as a local decomposition in the sense that its average yields the KL divergence, but unlike the mismatched IEM, it is not a valid distance (for instance, it can take negative values).
The mismatched IEM can thus be thought of as a positive-definite decomposition of the log-probability ratio.

We now reinterpret \cref{eq:average_iem_kl} in the case of the IEM given by \cref{eq:distance-qv}.
Consider the scenario where the two distributions are $p_\rvx$ and $p_{\tilde \rvx}$, where $\tilde \rvx = \rvx - \vs$ for some additive (fixed) shift $\vs$. We then have $p_{\tilde \rvx}(\tilde\vx) = p_{\rvx}(\tilde\vx + \vs)$, and thus $\text{IEM}_{p_\rvx,p_{\tilde \rvx}}(\vx, \tilde\vx) = \text{IEM}(\vx, \tilde\vx + \vs)$.
In this setting, \cref{eq:average_iem_kl} becomes
\begin{align}
    \KL(p_{\rvx} \,\|\, p_{\tilde{\rvx}})=\frac{1}{2}\E\bracket{\text{IEM}^{2}(\rvx,\rvx+\vs)}.
    \label{eq:average_iem_kl_shift}
\end{align}
By taking $\vs = \vx_2 - \vx_1$, this equation allows us to interpret $\text{IEM}(\vx_1,\vx_2)$ as the term corresponding to $\rvx = \vx_1$ in the local decomposition introduced above of the KL divergence between $p_\rvx$ and its translation by $\vx_2 - \vx_1$.
Again, it formalizes the intuition that the IEM compares the local behavior of the density around the two points $\vx_1$ and $\vx_2$.

\section{Proofs}
\subsection{Proof of \cref{prop:proper_metric}}\label{appendix:quadratic-is-metric}

\propermetric*

\begin{proof}
We verify the four metric axioms.

\paragraph{Symmetry.}
Swapping $\vx_{1}$ and $\vx_{2}$ leaves the squared norm unchanged, so
\begin{align}
\text{IEM}(\vx_{1},\vx_{2},\snr)=\text{IEM}(\vx_{2},\vx_{1},\snr).
\end{align}

\paragraph{Non-negativity.}
By definition, the integrand is a squared norm and is thus nonnegative.
The integral and square root preserve nonnegativity, hence
$\text{IEM}(\vx_{1},\vx_{2},\snr)\ge 0$.

\paragraph{Positive definiteness.}
Suppose $\text{IEM}(\vx_{1},\vx_{2},\snr)=0$. Then for Lebesgue-a.e. $\gamma\in[0,\snr]$ we have that
\begin{align}
\E\bracket{\norm{\nabla\log p_{\rvy_{\gamma}}(\gamma\vx_{1}+\rvw_{\gamma})
-\nabla\log p_{\rvy_{\gamma}}(\gamma\vx_{2}+\rvw_{\gamma})}^{2}}=0,
\end{align}
which implies
\begin{align}
\nabla\log p_{\rvy_{\gamma}}(\gamma\vx_{1}+\rvw_{\gamma})
=\nabla\log p_{\rvy_{\gamma}}(\gamma\vx_{2}+\rvw_{\gamma})
\quad\text{a.s. in }\rvw_{\gamma}.
\end{align}
Because $\rvw_{\gamma}$ has a strictly positive density on $\R^d$, it follows that
\begin{align}
\nabla\log p_{\rvy_{\gamma}}(\vy)=\nabla\log p_{\rvy_{\gamma}}(\vy+\gamma(\vx_{1}-\vx_{2})) \quad
\text{for Lebesgue-a.e. }\vy.
\end{align}
Thus the function
\begin{align}
g_{\gamma}(\vy)\coloneqq \log p_{\rvy_{\gamma}}(\vy+\gamma(\vx_{1}-\vx_{2}))-\log p_{\rvy_{\gamma}}(\vy)
\end{align}
is constant a.e., say $g_{\gamma}(\vy)= c_{\gamma}$. Exponentiating, we obtain
\begin{align}
p_{\rvy_{\gamma}}(\vy+\gamma(\vx_{1}-\vx_{2})) = e^{c_{\gamma}}\,p_{\rvy_{\gamma}}(\vy).
\end{align}
Integrating both sides over $\R^d$ gives
\begin{align}
1=\int p_{\rvy_{\gamma}}(\vy+\gamma(\vx_{1}-\vx_{2}))\d\vy
= e^{c_\gamma}\int p_{\rvy_{\gamma}}(\vy)\d\vy = e^{c_{\gamma}},
\end{align}
so $c_{\gamma}=0$. Hence
\begin{align}
p_{\rvy_{\gamma}}(\vy+\gamma(\vx_{1}-\vx_{2}))=p_{\rvy_{\gamma}}(\vy)
\quad\text{for Lebesgue-a.e.\ }\vy.
\end{align}
Fix some $\gamma > 0$ so that the above holds. This means that $p_{\rvy_{\gamma}}$ is invariant under translations by the vector $\gamma(\vx_{1}-\vx_{2})$.
However, there is no probability density on $\R^d$ that is invariant under a nonzero translation.
To show that this is true, if $\vx_1 \neq \vx_2$, consider the sets
\begin{align}
B_k = \left\{\vy \in \R^d\,\middle|\, k \leq \left\langle \vy, \frac{\vx_1 - \vx_2}{\gamma\norm{\vx_1 - \vx_2}^2} \right\rangle < k + 1 \right\}
\end{align}
for $k \in \sZ$.
$B_k$ forms a partition of $\R^d$, so $\sum_{k\in\sZ} \int_{B_k} p_{\rvy_\gamma}(\vy)\d\vy = 1$. But by translation invariance, the terms in the sum do not depend on $k$, which is a contradiction.
Therefore the translation vector must be zero, i.e., $\gamma(\vx_{1}-\vx_{2})=0$, which implies $\vx_{1}=\vx_{2}$.

\paragraph{Triangle inequality.}
For each $\gamma$, define
\begin{align}
\gM_{\gamma}(\vx_{i},\vx_{j}) \coloneqq \paren{\E\bracket{\norm{\nabla\log p_{\rvy_{\gamma}}(\gamma\vx_{i}+\rvw_{\gamma})-
\nabla\log p_{\rvy_{\gamma}}(\gamma\vx_{j}+\rvw_{\gamma})}^{2}}}^{1/2}.
\end{align}
$\gM_{\gamma}$ trivially satisfies the triangle inequality:
\begin{align}
\gM_{\gamma}(\vx_{1},\vx_{3})\leq\gM_{\gamma}(\vx_{1},\vx_{2}) + \gM_{\gamma}(\vx_{2},\vx_{3}).
\end{align}

Integrating over $\gamma$ and applying the Minkowski inequality, we get
\begin{align}
\text{IEM}^2(\vx_{1},\vx_{3},\snr)
&=\int_{0}^{\snr} \gM_{\gamma}(\vx_{1},\vx_{3})^2\d\gamma \nonumber\\
&\le \int_{0}^{\snr} \bigl(\gM_{\gamma}(\vx_{1},\vx_{2}) + \gM_{\gamma}(\vx_{2},\vx_{3})\bigr)^2\d\gamma \nonumber\\
&\le \Biggl(
\sqrt{\int_{0}^{\snr}\gM_{\gamma}(\vx_{1},\vx_{2})^2\d\gamma}
+\sqrt{\int_{0}^{\snr} \gM_{\gamma}(\vx_{2},\vx_{3})^2\d\gamma}
\Biggr)^2.
\end{align}
Taking the square root on both sides gives
\begin{align}
\text{IEM}(\vx_{1},\vx_{3},\snr)\leq
\text{IEM}(\vx_{1},\vx_{2},\snr)
+\text{IEM}(\vx_{2},\vx_{3},\snr).
\end{align}
\end{proof}

\subsection{Proof of \cref{prop:local_metric}}
\label{app:local_metric}
\localmetric*
\begin{proof}
We begin by taking the Taylor expansion of the IEM to derive a first expression for the local metric $\mG(\vx,\snr)$ in terms of the Hessian matrix of $\log p_{\rvy_\gamma}$ (\cref{app:taylor_expansion}). We then derive an equivalent expression in terms of the covariance of the posterior $p_{\rvx|\rvy_\gamma}$ (\cref{app:hessian_covariance}).
Finally, we relate the average of the metric $\mG(\rvx)$ to the average of the Hessian of $\log{p_{\rvx}}$ (\cref{app:average_metric}).
\subsubsection{Taylor expansion of the IEM}
\label{app:taylor_expansion}

We Taylor-expand the IEM distance between $\vx$ and $\vx + \vepsilon$ in $\vepsilon$:
\begin{align}
\text{IEM}^2(\vx,\vx + \vepsilon,\snr) &= \int_{0}^{\snr}\E\bracket{\norm{\nabla\log{p_{\rvy_{\gamma}}(\gamma\vx + \gamma\vepsilon +\rvw_{\gamma})}-\nabla\log{p_{\rvy_{\gamma}}(\gamma\vx+\rvw_{\gamma})}}^{2}}\d\gamma \\
&= \int_{0}^{\snr}\E\bracket{\norm{\gamma \nabla^2\log{p_{\rvy_{\gamma}}(\gamma\vx +\rvw_{\gamma})}\vepsilon + o(\vepsilon)}^{2}}\d\gamma \\
&= \vepsilon^\top \paren{\int_{0}^{\snr}\gamma^2\E\bracket{\paren{ \nabla^2\log{p_{\rvy_{\gamma}}(\gamma\vx +\rvw_{\gamma})}}^2}\d\gamma} \vepsilon + o(\norm{\vepsilon}^2)
\end{align}
We thus have
\begin{align}
  \mG(\vx,\snr)=\int_{0}^{\snr}\gamma^2\E\bracket{\paren{\nabla^2 \log p_{\rvy_\gamma}(\gamma\vx+\rvw_{\gamma})}^2}\d\gamma.
  \label{eq:local_metric_proof}
\end{align}

\subsubsection{From the Hessian of the noisy channel log-density to the posterior covariance}
\label{app:hessian_covariance}

We now show that the Hessian of $\log p_{\rvy_\gamma}$ can be expressed in terms of the posterior covariance of $\rvx$ conditioned on $\rvy_\gamma$:
\begin{align}
    \nabla^2 \log p_{\rvy_\gamma}(\vy) = \Cov\bracket{\rvx\,\middle|\,\rvy_\gamma = \vy} - \frac1\gamma\mI.
\end{align}
This relationship has already appeared in the literature in several contexts, and is often referred to as the ``second-order Tweedie identity.''
To the best of our knowledge, it was first derived by \citet[][Proposition 3]{hatsell1971some}. For completeness and notational consistency, we include a derivation here.

We have
\begin{align}
    \log p_{\rvy_\gamma}(\vy) = \log\paren{\int p_\rvx(\vx) p_{\rvy_\gamma|\rvx}(\vy|\vx)\d\vx}.
\end{align}
Differentiating w.r.t. $\vy$ gives
\begin{align}
    \nabla \log p_{\rvy_\gamma}(\vy) &= \frac{1}{p_{\rvy_\gamma}(\vy)} \int p_\rvx(\vx) \nabla_{\vy} p_{\rvy_\gamma|\rvx}(\vy|\vx)\d\vx.
\end{align}
Differentiating again w.r.t. $\vy$ gives
\begin{align}
    \nabla^2 \log p_{\rvy_\gamma}(\vy) &= \frac{1}{p_{\rvy_\gamma}(\vy)} \int p_\rvx(\vx) \nabla_{\vy}^2 p_{\rvy_\gamma|\rvx}(\vy|\vx)\d\vx \nonumber\\
    &\quad- \frac{1}{p_{\rvy_\gamma}(\vy)^2} \paren{\int p_\rvx(\vx) \nabla_{\vy} p_{\rvy_\gamma|\rvx}(\vy|\vx)\d\vx} \paren{\int p_\rvx(\vx) \nabla_{\vy} p_{\rvy_\gamma|\rvx}(\vy|\vx)\d\vx}^\top \\
    &= \E\bracket{\frac{\nabla^2_\vy p_{\rvy_\gamma | \rvx}(\vy|\rvx)}{p_{\rvy_\gamma|\rvx}(\vy|\rvx)} \,\middle|\, \rvy_\gamma = \vy}\nonumber\\
    &\quad-\E\bracket{\nabla_\vy \log p_{\rvy_\gamma | \rvx}(\vy|\rvx) \,\middle|\, \rvy_\gamma = \vy} \E\bracket{\nabla_\vy \log p_{\rvy_\gamma | \rvx}(\vy|\rvx) \,\middle|\, \rvy_\gamma = \vy}^\top\label{eq:hessiancov}
\end{align}
Here, $\rvy_\gamma\,|\,\rvx \sim \mathcal N(\gamma \rvx, \gamma\mI)$.
A direct calculation gives
\begin{align}
    &\nabla_\vy \log p_{\rvy_\gamma|\rvx}(\vy|\vx) = \vx - \frac{1}{\gamma}\vy,~~\mbox{and}\\
    &\frac{\nabla^2_\vy p_{\rvy_\gamma | \rvx}(\vy|\rvx)}{p_{\rvy_\gamma|\rvx}(\vy|\rvx)} = \paren{\vx - \frac{1}{\gamma}\vy}\paren{\vx - \frac{1}{\gamma}\vy}^\top -\frac1\gamma\mI.
\end{align}
Substituting into \cref{eq:hessiancov} and rearranging then yields
\begin{align}
    \nabla^2 \log p_{\rvy_\gamma}(\vy) = \Cov\bracket{\rvx\,\middle|\,\rvy_\gamma = \vy} - \frac1\gamma\mI.
    \label{eq:tweedie_second_order}
\end{align}

Finally, injecting \cref{eq:tweedie_second_order} into \cref{eq:local_metric_proof} gives the second expression for the local metric:
\begin{align}
    \mG(\vx,\snr)=\int_{0}^{\snr}\E\bracket{\paren{\mI-\gamma\Cov\bracket{\rvx\,\middle|\,\rvy_\gamma = \gamma\vx + \rvw_\gamma}}^2}\d\gamma.
\end{align}

\subsubsection{Average local metric}
\label{app:average_metric}

We begin by decomposing the Hessian of the log-density $\log{p_{\rvx}}$. Specifically, we use \cref{eq:diff-x-pointwise-i-mmse} to express the log-probability ratio between a point $\vx$ and a perturbed version of it $\vx+\vepsilon$.
Taking $\Gamma\to\infty$ and averaging over $p_{\rvw_{\gamma}}$ gives
\begin{align}
    \log{\paren{\frac{p_\rvx(\vx)}{p_\rvx(\vx + \vepsilon)}}}
    &= \frac{1}{2}\int_{0}^{\infty}\E\bracket{\norm{\ve_{\gamma}(\vx + \vepsilon,\rvw_{\gamma})}^{2}-\norm{\ve_{\gamma}(\vx,\rvw_{\gamma})}^{2}}\d\gamma\label{eq:log-probability-ratio-process}
\end{align}
Next, we take the Taylor expansion of the tracking error. From Tweedie--Miyasawa,
\begin{align}
    \ve_{\gamma}(\vx,\rvw_{\gamma}) \coloneqq\vx-\E[\rvx|\rvy_{\gamma}=\gamma\vx+\rvw_{\gamma}]= -\frac{1}{\gamma}\rvw_\gamma -\nabla\log p_{\rvy_\gamma}(\gamma\vx+\rvw_{\gamma}),
\end{align}
we have
\begin{align}
    -\ve_{\gamma}(\vx + \vepsilon,\rvw_{\gamma}) &= \frac{1}{\gamma}\rvw_{\gamma} + \nabla\log p_{\rvy_\gamma}(\gamma\vx+\rvw_{\gamma}) + \gamma \nabla^2\log p_{\rvy_\gamma}(\gamma\vx+\rvw_{\gamma})\vepsilon \nonumber\\
    &\quad{}+ \frac{\gamma^2}2 \nabla^3\log p_{\rvy_\gamma}(\gamma\vx+\rvw_{\gamma})(\vepsilon,\vepsilon) + o(\norm{\vepsilon}^2),\label{eq:denoising-error-expansion}
\end{align}
where we write $\tA(\vx,\vy) = (\sum_{jk} A_{ijk} x_j y_k)_i$ for the partial contraction of a symmetric third-order tensor $\tA$ against the vectors $\vx$, $\vy$.
By inserting \cref{eq:denoising-error-expansion} into the expression of the log-probability ratio in \cref{eq:log-probability-ratio-process} and expanding the square, we obtain
\begin{align}
    &\log{\paren{\frac{p_\rvx(\vx)}{p_\rvx(\vx + \vepsilon)}}}
    \nonumber\\
    &=\frac{1}{2}\int_{0}^{\infty}\E\left[ 2\left\langle \frac{1}{\gamma}\rvw_{\gamma} + \nabla\log p_{\rvy_\gamma}(\gamma\vx+\rvw_{\gamma}), \gamma \nabla^2\log p_{\rvy_\gamma}(\gamma\vx+\rvw_{\gamma})\vepsilon + \frac{\gamma^2}2 \nabla^3\log p_{\rvy_\gamma}(\gamma\vx+\rvw_{\gamma})(\vepsilon,\vepsilon) \right\rangle \right.\nonumber \\
    &\quad\quad\quad\quad\quad\quad\left. + \norm{\gamma \nabla^2\log p_{\rvy_\gamma}(\gamma\vx+\rvw_{\gamma})\vepsilon}^2 \right]\d\gamma + o(\norm{\vepsilon}^2) \\
    &= \vepsilon^\top \int_{0}^{\infty}\E\bracket{ \nabla^2\log p_{\rvy_\gamma}(\gamma\vx+\rvw_{\gamma})\paren{\frac{1}{\gamma}\rvw_{\gamma} + \nabla\log p_{\rvy_\gamma}(\gamma\vx+\rvw_{\gamma})} }\gamma\d\gamma \nonumber\\
    &\quad+ \frac12 \vepsilon^\top \paren{\int^\infty_0 \E\bracket{\paren{\nabla^2\log p_{\rvy_\gamma}(\gamma\vx+\rvw_{\gamma})}^2 + \nabla^3 \log p_{\rvy_\gamma}(\gamma\vx+\rvw_{\gamma})\paren{\frac{1}{\gamma}\rvw_{\gamma} + \nabla\log p_{\rvy_\gamma}(\gamma\vx+\rvw_{\gamma})} } \gamma^2\d\gamma} \vepsilon
\end{align}
By identification, it follows that
\begin{align}
    -\nabla^2 \log{p_\rvx(\vx)} 
    &= \int^\infty_0 \E\bracket{\paren{\nabla^2\log p_{\rvy_\gamma}(\gamma\vx+\rvw_{\gamma})}^2 + \nabla^3 \log p_{\rvy_\gamma}(\gamma\vx+\rvw_{\gamma})\paren{\frac{1}{\gamma}\rvw_{\gamma} + \nabla\log p_{\rvy_\gamma}(\gamma\vx+\rvw_{\gamma})} } \gamma^2\d\gamma.
\end{align}
Taking the expectation over $p_\rvx$, we obtain
\begin{align}
    \E\bracket{-\nabla^2 \log p_\rvx(\rvx)}
    &= \int^\infty_0 \E\bracket{\paren{\nabla^2\log p_{\rvy_\gamma}(\rvy_\gamma)}^2 + \nabla^3 \log p_{\rvy_\gamma}(\rvy_\gamma)\paren{\frac{1}{\gamma}\rvw_{\gamma} + \nabla\log p_{\rvy_\gamma}(\rvy_\gamma)} } \gamma^2\d\gamma,
\end{align}
where we used the fact that $\rvy_{\gamma}=\gamma\rvx+\rvw_{\gamma}$.
The second term in the expectation then vanishes due to Stein's lemma~\citep{10.1214/aos/1176345632},
\begin{align}
    \E\bracket{\nabla^3 \log p_{\rvy_\gamma}(\rvy_\gamma)\paren{\frac{1}{\gamma}\rvw_{\gamma}}} = \E\bracket{\nabla^2\Delta \log p_{\rvy_\gamma}(\rvy_\gamma)},
\end{align}
while applying integration by parts yields
\begin{align}
    \E\bracket{\nabla^3 \log p_{\rvy_\gamma}(\rvy_\gamma)\paren{\nabla\log p_{\rvy_\gamma}(\rvy_\gamma)}} = -\E\bracket{\nabla^2\Delta \log p_{\rvy_\gamma}(\rvy_\gamma)}.
\end{align}
Finally, we obtain
\begin{align}
    \E\bracket{-\nabla^2 \log{{p_\rvx(\vx)}}}
    &= \int^\infty_0 \E\bracket{\paren{\nabla^2\log p_{\rvy_\gamma}(\rvy_\gamma)}^2} \gamma^2\d\gamma = \E\bracket{\mG(\rvx)},
\end{align}
which is the desired relationship. 

The last equality in the theorem is generic and classical. By expanding the derivatives, we have
\begin{align}
    -\int p_{\rvx}(\vx) \nabla^2 \log p_{\rvx}(\vx)\d\vx
    &= \int \paren{p_{\rvx}(\vx) \nabla\log p_\rvx(\vx)\nabla \log p_\rvx(\vx)^\top - \nabla^2 p_\rvx(\vx)} \d\vx.
\end{align}
The second term in the integral on the right-hand side vanishes through integration by parts.

\end{proof}

\subsection{Proof that the IEM coincides with the Mahalanobis distance for Gaussian priors}\label{appendix:mahalanobis}

Here, we prove the following proposition:
\begin{proposition}
Let $\rvx$ be a Gaussian random vector with mean $\vmu$ and covariance $\mSigma\succeq 0$.
For $\gamma\geq 0$, define
\begin{align}
\rvy_\gamma=\gamma\rvx + \rvw_\gamma,
\quad \rvw_\gamma \sim \mathcal N(\vzero,\gamma \mI),\quad \rvw_\gamma \perp \rvx.\nonumber
\end{align}
Then the MMSE estimator of $\rvx$ given $\rvy_{\gamma}$ is given by
\begin{align}
\E\bracket{\rvx\,\middle|\,\rvy_{\gamma}}= \vmu +\mK_\gamma\,(\rvy_\gamma-\gamma\vmu),\nonumber
\end{align}
where $\mK_\gamma\coloneqq\gamma\mSigma(\gamma^{2}\mSigma+\gamma\mI)^{-1}$.
Further, for any $\vx_1,\vx_2$ with $\Delta\coloneqq \vx_1-\vx_2$, if $\mSigma\succ 0$,
\begin{align}
\textnormal{IEM}^2(\vx_{1},\vx_{2},\infty)=\int_{0}^{\infty} \norm{\rve_{\gamma}(\vx_1,\rvw_\gamma)-\rve_{\gamma}(\vx_2,\rvw_\gamma)}^2 \d\gamma
= \Delta^\top \mSigma^{-1}\Delta,\nonumber
\end{align}
where $\ve_{\gamma}(\vx,\rvw_{\gamma})\coloneqq\vx-\E[\rvx|\rvy_\gamma=\gamma\vx+\rvw_\gamma]$.
If $\mSigma\succeq 0$ is singular, the integral equals $\Delta^\top\mSigma^{\dagger}\Delta$ provided $\Delta\in\operatorname{range}(\mSigma)$ (where $\Sigma^\dagger$ is the pseudoinverse of $\Sigma$) and diverges to $+\infty$ otherwise.
\end{proposition}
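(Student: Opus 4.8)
The plan is to carry out three steps: write down the affine MMSE estimator, use it to reduce $\textnormal{IEM}^2$ to a single \emph{deterministic} quadratic form in $\Delta = \vx_1 - \vx_2$, and evaluate the resulting scalar integral by diagonalizing $\mSigma$. For the first step, since $(\rvx,\rvy_\gamma)$ is jointly Gaussian with $\E[\rvy_\gamma]=\gamma\vmu$, $\Cov[\rvy_\gamma]=\gamma^2\mSigma+\gamma\mI$, and $\Cov[\rvx,\rvy_\gamma]=\gamma\mSigma$, the conditional mean is the usual linear-Gaussian regression $\E[\rvx\mid\rvy_\gamma]=\vmu+\gamma\mSigma(\gamma^2\mSigma+\gamma\mI)^{-1}(\rvy_\gamma-\gamma\vmu)$, i.e.\ the claimed formula with $\mK_\gamma=\gamma\mSigma(\gamma^2\mSigma+\gamma\mI)^{-1}$. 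Note that $\gamma^2\mSigma+\gamma\mI=\gamma(\gamma\mSigma+\mI)\succ 0$ for every $\gamma>0$, so the inverse is well defined even when $\mSigma$ is singular.

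For the second step, substituting $\rvy_\gamma=\gamma\vx+\rvw_\gamma$ gives $\ve_\gamma(\vx,\rvw_\gamma)=(\mI-\gamma\mK_\gamma)(\vx-\vmu)-\mK_\gamma\rvw_\gamma$. The crucial observation is that the noise term $-\mK_\gamma\rvw_\gamma$ does not depend on $\vx$, so it cancels in the difference, leaving the deterministic vector $\ve_\gamma(\vx_1,\rvw_\gamma)-\ve_\gamma(\vx_2,\rvw_\gamma)=(\mI-\gamma\mK_\gamma)\Delta$. A one-line simplification, $\gamma\mK_\gamma=\gamma\mSigma(\gamma\mSigma+\mI)^{-1}$ and hence $\mI-\gamma\mK_\gamma=(\gamma\mSigma+\mI)^{-1}$, then makes the expectation over $\rvw_\gamma$ vacuous and yields $\textnormal{IEM}^2(\vx_1,\vx_2,\infty)=\int_0^\infty\Delta^\top(\gamma\mSigma+\mI)^{-2}\Delta\,\d\gamma$.

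For the third step, I diagonalize $\mSigma=\mU\mLambda\mU^\top$ with eigenvalues $\lambda_i\ge 0$ and write $\tilde\Delta=\mU^\top\Delta$, so that the integral becomes $\sum_i\tilde\Delta_i^2\int_0^\infty(\gamma\lambda_i+1)^{-2}\,\d\gamma$. The scalar integral equals $1/\lambda_i$ when $\lambda_i>0$ and diverges when $\lambda_i=0$. Hence for $\mSigma\succ 0$ we obtain $\sum_i\tilde\Delta_i^2/\lambda_i=\Delta^\top\mSigma^{-1}\Delta$, the Mahalanobis distance; for singular $\mSigma$ the sum is finite precisely when $\tilde\Delta_i=0$ for every $i$ with $\lambda_i=0$, that is, when $\Delta\in\operatorname{range}(\mSigma)$, in which case it equals $\sum_{\lambda_i>0}\tilde\Delta_i^2/\lambda_i=\Delta^\top\mSigma^\dagger\Delta$, and otherwise it is $+\infty$.

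This is a chain of elementary computations with no real obstacle; the only place that demands care is the singular case, where one must track which eigenmodes contribute a finite versus an infinite integral and phrase the resulting dichotomy in terms of $\operatorname{range}(\mSigma)$ and the pseudoinverse. As a consistency check, the same conclusion also follows from \cref{prop:local_metric}: for a Gaussian prior the denoiser is affine, so $\Delta\mapsto\ve_\gamma(\vx_1,\rvw_\gamma)-\ve_\gamma(\vx_2,\rvw_\gamma)$ is exactly linear, the second-order Taylor expansion of the IEM is therefore exact, and $\mG(\vx)\equiv\mSigma^{-1}$ is constant, matching $\textnormal{IEM}^2=\Delta^\top\mSigma^{-1}\Delta$.
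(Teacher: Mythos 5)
Your proposal is correct and follows essentially the same route as the paper's proof: joint-Gaussian conditioning for the affine MMSE estimator, cancellation of the noise term in the error difference to get $(\gamma\mSigma+\mI)^{-1}\Delta$, and termwise integration after diagonalizing $\mSigma$, with the identical treatment of the singular case via $\operatorname{range}(\mSigma)$. The closing consistency check against the local-metric result is a nice addition but does not change the argument.
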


\begin{proof}
If $\rvx$ is Gaussian, then $(\rvx, \rvy_\gamma)$ is jointly Gaussian, with mean $(\vmu, \gamma\vmu)$ and joint covariance
\begin{align}
    \begin{pmatrix}
        \mSigma & \gamma\mSigma \\
        \gamma\mSigma & \gamma^2\mSigma + \gamma\mI
    \end{pmatrix}
\end{align}
By elementary properties of Gaussian distributions, it follows that $\rvx$ conditioned on $\rvy_\gamma$ also has a Gaussian distribution, with mean given by
\begin{align}
\E\bracket{\rvx\,\middle|\,\rvy_{\gamma}}= \vmu +\mK_\gamma\,(\rvy_\gamma-\gamma\vmu),
\end{align}
where $\mK_\gamma\coloneqq\gamma\mSigma(\gamma^{2}\mSigma+\gamma\mI)^{-1}$.
The denoising error vector is then, for any $\vx$,
\begin{align}
\rve_{\gamma}(\vx,\rvw_\gamma)
= \vx-\vmu - \mK_\gamma\paren{\gamma(\vx-\vmu)+\rvw_\gamma}.
\end{align}
Hence,
\begin{align}
\rve_{\gamma}(\vx_1,\rvw_\gamma)-\rve_{\gamma}(\vx_2,\rvw_\gamma)
= \paren{\mI-\gamma \mK_\gamma}\Delta.
\end{align}
Since $\mK_\gamma=\gamma\mSigma(\gamma^{2}\mSigma+\gamma\mI)^{-1}=\mSigma(\gamma\mSigma+\mI)^{-1}$, we have
\begin{align}
\mI-\gamma \mK_\gamma
= \mI-\gamma\mSigma(\gamma\mSigma+\mI)^{-1}
= (\gamma\mSigma+\mI)(\gamma\mSigma+\mI)^{-1}-\gamma\mSigma(\gamma\mSigma+\mI)^{-1}
= (\gamma\mSigma+\mI)^{-1}.
\end{align}
Therefore
\begin{align}
\norm{\rve_{\gamma}(\vx_1,\rvw_\gamma)-\rve_{\gamma}(\vx_2,\rvw_\gamma)}^2
= \Delta^\top(\gamma\mSigma+\mI)^{-2}\Delta.
\end{align}
Assume first $\mSigma\succ 0$ and diagonalize $\mSigma=\mU\mLambda \mU^\top$ with $\mLambda=\mathrm{diag}(\lambda_1,\dots,\lambda_d)$, $\lambda_i>0$. Writing $\va\coloneqq\mU^\top\Delta$,
\begin{align}
\Delta^\top(\gamma\mSigma+\mI)^{-2}\Delta
= \sum_{i=1}^d \frac{a_i^2}{\paren{1+\gamma\lambda_i}^2}.
\end{align}
Integrating termwise,
\begin{align}
\int_0^\infty \frac{\d\gamma}{(1+\gamma\lambda_i)^2}
= \frac{1}{\lambda_i}\Big[-\frac{1}{1+\gamma\lambda_i}\Big]_{0}^{\infty}
= \frac{1}{\lambda_i}.
\end{align}
Thus,
\begin{align}
\int_0^\infty \Delta^\top(\gamma\mSigma+\mI)^{-2}\Delta\d\gamma
= \sum_{i=1}^d \frac{a_i^2}{\lambda_i}
= \Delta^\top \mU\mLambda^{-1}\mU^\top \Delta
= \Delta^\top \mSigma^{-1}\Delta.
\end{align}

If $\mSigma\succeq 0$ is singular, decompose with $\lambda_i\ge 0$ and note that the integral of $a_i^2/(1+\gamma\lambda_i)^2$ diverges when $\lambda_i=0$ and $a_i\neq 0$, while it equals $0$ when $a_i=0$. Thus finiteness requires $\Delta\in\operatorname{range}(\mSigma)$, in which case the same computation over the nonzero eigenvalues yields $\Delta^\top\mSigma^{\dagger}\Delta$.
\end{proof}

\subsection{Additional properties of the process $\rvz_{\gamma}$}\label{appendix:additional_properties}
\paragraph{Invariance to Euclidean isometries.}
In~\cref{sec:qv_functional} we show that one may derive different kinds of distance functions from the process $\rvz_{\gamma}$.
A natural question, then, is whether such distances are preserved under Euclidean isometries.
In other words, do these distance functions depend on the coordinate system in which $\rvx$ is represented?
The following proposition establishes that the process $\rvz_{\gamma}$ is invariant to such isometries, in the sense that changing the coordinate system of $\rvx$ yields the same stochastic process $\rvz_{\gamma}$ up to a reparameterization of the Wiener process.
Consequently, the distances introduced in the previous sections are invariant to Euclidean isometries as well.
\begin{proposition}\label{prop:euclidean-isometries}
Let $\phi(\vx)=\mA\vx+\vb$ with $\mA\in\R^{d\times d}$ orthogonal ($\mA^\top \mA=\mI$) and $\vb\in\R^d$.
Define
\begin{align}
&\rvy_\gamma = \gamma\rvx + \rvw_\gamma,\nonumber\\
&\rvy_\gamma^{\phi} = \gamma\phi(\rvx) + \rvw_\gamma,\nonumber
\end{align}
where $\rvw_\gamma \sim \mathcal{N}(0,\gamma\mI)$ is a standard Wiener process statistically independent of $\rvx$.
Then for all $\vx_1,\vx_2\in\R^d$ and all $\gamma\geq0$, we have
\[
\log{\paren{\frac{p_{\rvy_{\gamma}^{\phi}}(\gamma\phi(\rvx_{2})+\rvw_{\gamma})}{p_{\rvy_{\gamma}^{\phi}}(\gamma\phi(\rvx_{1})+\rvw_{\gamma})}}}=\log\paren{\frac{p_{\rvy_\gamma}(\gamma\vx_2+\rvw_{\gamma}')}{p_{\rvy_\gamma}(\gamma\vx_1+\rvw_{\gamma}')}},
\]
where $\rvw_{\gamma}'\coloneqq\mA^{\top}\rvw_{\gamma}$ is also a standard Wiener process ($\rvw_{\gamma}'\overset{d}{=}\rvw_{\gamma})$.
\end{proposition}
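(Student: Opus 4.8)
The plan is to exploit the fact that an orthogonal change of coordinates acts on probability densities by a pure relabeling of the argument: since $\det\mA=\pm1$, the Jacobian is trivial, and the transformed observation channel is an affine image of the original channel driven by a rotated copy of the \emph{same} noise realization. The whole proof then reduces to the algebraic identity $\mA^\top\mA=\mI$ plus the change-of-variables formula for densities.

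First I would record the algebraic rewriting. Because $\mA^\top\mA=\mI$,
\begin{align}
\rvy_\gamma^{\phi}=\gamma(\mA\rvx+\vb)+\rvw_\gamma=\mA\paren{\gamma\rvx+\mA^\top\rvw_\gamma}+\gamma\vb=\mA\,\rvy_\gamma'+\gamma\vb,
\end{align}
where $\rvy_\gamma'\coloneqq\gamma\rvx+\rvw_{\gamma}'$ and $\rvw_{\gamma}'\coloneqq\mA^\top\rvw_\gamma$. Next, $\rvw_{\gamma}'\sim\mathcal N(\vzero,\gamma\mA^\top\mA)=\mathcal N(\vzero,\gamma\mI)$, and it is independent of $\rvx$ (being a deterministic function of $\rvw_\gamma$, which is independent of $\rvx$); hence $\rvy_\gamma'$ has exactly the same law as $\rvy_\gamma$, so their density functions are \emph{identical}, $p_{\rvy_\gamma'}\equiv p_{\rvy_\gamma}$. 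Then the change-of-variables formula for the affine map $\vu\mapsto\mA\vu+\gamma\vb$ (with $\lvert\det\mA\rvert=1$) gives, for every $\vy$,
\begin{align}
p_{\rvy_\gamma^{\phi}}(\vy)=p_{\rvy_\gamma'}\paren{\mA^\top(\vy-\gamma\vb)}=p_{\rvy_\gamma}\paren{\mA^\top(\vy-\gamma\vb)}.
\end{align}
Finally, substituting $\vy=\gamma\phi(\vx_i)+\rvw_\gamma=\gamma\mA\vx_i+\gamma\vb+\rvw_\gamma$ for $i\in\{1,2\}$ and simplifying with $\mA^\top\mA=\mI$,
\begin{align}
\mA^\top\paren{\gamma\phi(\vx_i)+\rvw_\gamma-\gamma\vb}=\mA^\top\paren{\gamma\mA\vx_i+\rvw_\gamma}=\gamma\vx_i+\mA^\top\rvw_\gamma=\gamma\vx_i+\rvw_{\gamma}',
\end{align}
so that $p_{\rvy_\gamma^{\phi}}(\gamma\phi(\vx_i)+\rvw_\gamma)=p_{\rvy_\gamma}(\gamma\vx_i+\rvw_{\gamma}')$ almost surely. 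Forming the ratio for $i=2$ over $i=1$ and taking logarithms yields the claimed identity, and hence the invariance of $\rvz_\gamma$ (and of every IEM and generalized IEM built from it), since quadratic variations and the expectations defining the IEMs are insensitive to replacing $\rvw_\gamma$ by the equidistributed process $\rvw_{\gamma}'$.

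The only real subtlety — more bookkeeping than obstacle — is to keep the two roles of $\rvw_{\gamma}'$ consistent: it is simultaneously the rotated realization of the noise driving the transformed channel and the reparameterized Wiener process that appears on the right-hand side of the claimed equation. One must also be careful that $p_{\rvy_\gamma'}\equiv p_{\rvy_\gamma}$ is an equality of \emph{functions} (licensed by $\rvw_{\gamma}'$ being both equidistributed with $\rvw_\gamma$ and independent of $\rvx$), not merely an equality in distribution, as this is what permits evaluating the densities at the random points above. Everything else is the unit-Jacobian change of variables and the relation $\mA^\top\mA=\mI$.
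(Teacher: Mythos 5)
Your proof is correct and follows essentially the same route as the paper's: rewrite $\rvy_\gamma^{\phi}=\mA\rvy_\gamma'+\gamma\vb$ with $\rvw_\gamma'=\mA^\top\rvw_\gamma$, use the unit-Jacobian change of variables to get $p_{\rvy_\gamma^{\phi}}(\vy)=p_{\rvy_\gamma}\paren{\mA^\top(\vy-\gamma\vb)}$, and substitute the evaluation points. Your added remark that $p_{\rvy_\gamma'}\equiv p_{\rvy_\gamma}$ must hold as an equality of functions (not merely in distribution) is a correct and worthwhile clarification of a step the paper leaves implicit.
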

\begin{proof}
Let $\rvw_\gamma'\coloneqq\mA^\top\rvw_\gamma$. Since $\mA$ is orthogonal, 
$\rvw_\gamma'\sim\mathcal N(\vzero,\gamma\mI)$ and remains a standard Wiener process independent of $\rvx$.
Define $\rvy_{\gamma}'\coloneqq\gamma\rvx+\rvw_\gamma'$. 
Then $\rvy_{\gamma}'\stackrel{d}{=}\rvy_\gamma$ and
\begin{align}
\rvy_\gamma^\phi &=\gamma\mA\rvx+\gamma\vb+\rvw_\gamma\nonumber\\
&=\mA(\gamma\rvx+\mA^\top\rvw_\gamma)+\gamma\vb\nonumber\\
&=\mA\rvy_{\gamma}'+\gamma\vb.\label{eq:1}
\end{align}
Now, since $\lvert\det\mA\rvert=1$, the change of variables formula gives, for every $\vy$,
\begin{align}
    p_{\rvy_\gamma^\phi}(\vy)&=p_{\mA\rvy_{\gamma}'+\gamma\vb}(\vy)\nonumber\\
    &=p_{\rvy_{\gamma}'}\paren{\mA^\top(\vy-\gamma\vb)}\nonumber\\
    &=p_{\rvy_{\gamma}}\paren{\mA^\top(\vy-\gamma\vb)}.
\end{align}
Thus, for any fixed $\vx_i$, it holds that
\begin{align}
p_{\rvy_\gamma^\phi}\paren{\gamma\phi(\vx_i)+\rvw_\gamma}
&=p_{\rvy_\gamma}\paren{\mA^\top\paren{\gamma\mA\vx_i+\gamma\vb+\rvw_\gamma-\gamma\vb}}\nonumber\\
&=p_{\rvy_\gamma}\paren{\gamma\vx_i+\mA^\top\rvw_\gamma}.
\end{align}
Hence
\begin{align}
  \log\paren{\frac{p_{\rvy_\gamma^\phi}(\gamma\phi(\vx_2)+\rvw_\gamma)}{p_{\rvy_\gamma^\phi}(\gamma\phi(\vx_1)+\rvw_\gamma)}}
=\log\paren{\frac{p_{\rvy_\gamma}(\gamma\vx_2+\mA^\top\rvw_\gamma)}{p_{\rvy_\gamma}(\gamma\vx_1+\mA^\top\rvw_\gamma)}}=\log\paren{\frac{p_{\rvy_\gamma}(\gamma\vx_2+\rvw_{\gamma}')}{p_{\rvy_\gamma}(\gamma\vx_1+\rvw_{\gamma}')}}. 
\end{align}

\end{proof}

\paragraph{Invariance to sufficient statistics.}
Traditionally, perceptual distance functions are obtained by first defining a stochastic representation $\rvy$ of $\rvx$ through a likelihood function $p_{\rvy|\rvx}$, and then ``pulling back'' the information geometry of this conditional density to the signal space, as given by the Fisher information of the representation.
One can make a qualitative analogy to the IEM distance, interpreting the Gaussian channel $\rvy_{\gamma}$ as the ``representation'' of $\vx$.
The process $\rvz_{\gamma}(\vx_{1},\vx_{2})$ then compares the signals $\vx_{1}$ and $\vx_{2}$ by estimating each from its respective representation, and measuring the discrepancy in the resulting estimation errors back in the original signal space.
Although the ``pull back'' is now done through estimation quantities rather than information geometry, one may wonder whether they share similar properties.
In particular, an important property of the Fisher information metric is its invariance under sufficient statistics of the representation.
We show below that $\rvz_{\gamma}$, and thus the IEM, is also invariant under sufficient statistics of $\rvy_{\gamma}$.
\begin{proposition}\label{proposition:ygamma_invariant}
Let $\smash{\rvy^{\gT}_{\gamma}=\gT(\rvy_{\gamma},\gamma)}$ be a sufficient statistic of $\rvy_{\gamma}$ with respect to $\rvx$ for every $\gamma$, namely $\rvy_{\gamma}\leftrightarrow\rvy_{\gamma}^{\gT}\leftrightarrow\rvx$ is a Markov chain. 
Then
\begin{align}
\log{\paren{\frac{p_{\rvy_{\gamma}^{\gT}}\paren{\gT(\gamma\vx_{2}+\rvw_{\gamma},\gamma)}}{p_{\rvy_{\gamma}^{\gT}}\paren{\gT(\gamma\vx_{1}+\rvw_{\gamma},\gamma)}}}}=\log{\paren{\frac{p_{\rvy_{\gamma}}\paren{\gamma\vx_{2}+\rvw_{\gamma}}}{p_{\rvy_{\gamma}}\paren{\gamma\vx_{1}+\rvw_{\gamma}}}}}.\nonumber
\end{align}
\end{proposition}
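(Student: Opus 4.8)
The plan is to reduce the claimed identity, via the Fisher--Neyman factorization theorem, to a statement about the marginal densities $p_{\rvy_{\gamma}}$ and $p_{\rvy_{\gamma}^{\gT}}$, and then to exhibit the cancellation of the part of the density that is ``blind'' to the sufficient statistic. First I would invoke Fisher--Neyman for the Gaussian channel in \cref{eq:gaussian_channel}: since $\rvy_{\gamma}^{\gT}=\gT(\rvy_{\gamma},\gamma)$ is, for each fixed $\gamma$, a sufficient statistic of $\rvy_{\gamma}$ for $\rvx$, there are measurable nonnegative functions $g_{\gamma},h_{\gamma}$ with $p_{\rvy_{\gamma}|\rvx}(\vy|\vx)=g_{\gamma}(\gT(\vy,\gamma),\vx)\,h_{\gamma}(\vy)$, where $h_{\gamma}$ does not depend on $\vx$. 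Marginalizing over $p_{\rvx}$ gives $p_{\rvy_{\gamma}}(\vy)=h_{\gamma}(\vy)\,\bar g_{\gamma}(\gT(\vy,\gamma))$ with $\bar g_{\gamma}(t)\coloneqq\int g_{\gamma}(t,\vx)p_{\rvx}(\vx)\,\d\vx$; a disintegration of $p_{\rvy_{\gamma}}$ along the level sets of $\gT(\cdot,\gamma)$ (an ordinary change of variables if $\gT(\cdot,\gamma)$ is a diffeomorphism, a coarea argument otherwise) then shows $p_{\rvy_{\gamma}^{\gT}}(t)=\widehat h_{\gamma}(t)\,\bar g_{\gamma}(t)$, where the signal-dependent factor $\bar g_{\gamma}$ is literally the same and $\widehat h_{\gamma}$ is again $\vx$-free.

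Next I would plug these factorizations into the log-ratio appearing in the statement. With $t_{i}\coloneqq\gT(\gamma\vx_{i}+\rvw_{\gamma},\gamma)$, the factor $\bar g_{\gamma}$ cancels between the $\vx_{1}$ and $\vx_{2}$ terms, so
\begin{align}
\log\!\paren{\frac{p_{\rvy_{\gamma}^{\gT}}(t_{2})}{p_{\rvy_{\gamma}^{\gT}}(t_{1})}}-\log\!\paren{\frac{p_{\rvy_{\gamma}}(\gamma\vx_{2}+\rvw_{\gamma})}{p_{\rvy_{\gamma}}(\gamma\vx_{1}+\rvw_{\gamma})}}=\log\!\paren{\frac{\widehat h_{\gamma}(t_{2})}{\widehat h_{\gamma}(t_{1})}}-\log\!\paren{\frac{h_{\gamma}(\gamma\vx_{2}+\rvw_{\gamma})}{h_{\gamma}(\gamma\vx_{1}+\rvw_{\gamma})}},\nonumber
\end{align}
and the whole proposition becomes the claim that the right-hand side is zero. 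A convenient way to see this is to write, using the chain rule, $p_{\rvy_{\gamma}}(\vy)/p_{\rvy_{\gamma}^{\gT}}(\gT(\vy,\gamma))=p_{\rvy_{\gamma}|\rvy_{\gamma}^{\gT}}(\vy|\gT(\vy,\gamma))$, which by sufficiency equals $p_{\rvy_{\gamma}|\rvy_{\gamma}^{\gT},\rvx}(\vy|\gT(\vy,\gamma),\vx)$ for \emph{every} $\vx$; evaluating at $\vy=\gamma\vx_{i}+\rvw_{\gamma}$ with the choice $\vx=\vx_{i}$ and using $p_{\rvy_{\gamma}|\rvx}(\gamma\vx_{i}+\rvw_{\gamma}|\vx_{i})=p_{\rvw_{\gamma}}(\rvw_{\gamma})$ reduces the claim to showing that the fiber normalization $\int_{\{\gT(\cdot,\gamma)=t_{i}\}}p_{\rvw_{\gamma}}(\vy'-\gamma\vx_{i})\,\d\sigma$ does not change between $i=1$ and $i=2$.

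The hard part will be exactly this last equality, which is where the Gaussian structure of the channel, together with the positions of $\vx_{1},\vx_{2}$, comes in. The key point is that $\gamma\vx_{1}+\rvw_{\gamma}$ and $\gamma\vx_{2}+\rvw_{\gamma}$ differ only by the shift $\gamma(\vx_{1}-\vx_{2})$ and share the same realization $\rvw_{\gamma}$, and that the Markov condition $\rvy_{\gamma}\leftrightarrow\rvy_{\gamma}^{\gT}\leftrightarrow\rvx$ forces whatever $\gT$ discards to be pure noise, statistically independent of $\rvx$ (concretely, the minimal sufficient statistic of the location family $\{\gN(\gamma\vx,\gamma\mI):\vx\in\operatorname{supp}p_{\rvx}\}$ is the orthogonal projection onto $\operatorname{span}\{\vx-\vx':\vx,\vx'\in\operatorname{supp}p_{\rvx}\}$, and the complementary coordinates of $\rvy_{\gamma}$ form an $\vx$-independent Gaussian). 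Hence the discarded part contributes the same density value at the two points whenever $\vx_{1}-\vx_{2}$ lies in that relevant subspace --- in particular whenever $\vx_{1},\vx_{2}\in\operatorname{supp}p_{\rvx}$, which is the regime of interest. Once the right-hand side above is shown to vanish, the asserted equality of log-density ratios follows, and the invariance of the process $\rvz_{\gamma}$ --- hence of the IEM (\cref{eq:distance-qv}) and of all the generalized IEMs (\cref{eq:qv_distance_f}), via their quadratic variations --- is immediate.
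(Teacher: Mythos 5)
You take a genuinely different route from the paper's, and it has a gap at exactly the step you defer. The paper never manipulates densities of $\rvy_{\gamma}^{\gT}$ at all: it combines the tower property $\E\bracket{\rvx\,|\,\rvy_{\gamma}}=\E\bracket{\E\bracket{\rvx\,|\,\rvy_{\gamma},\rvy_{\gamma}^{\gT}}\,|\,\rvy_{\gamma}}$ with the Markov condition (which collapses the inner expectation to $\E\bracket{\rvx\,|\,\rvy_{\gamma}^{\gT}}$, itself a function of $\rvy_{\gamma}$ that can be pulled out) to conclude $\E\bracket{\rvx\,|\,\rvy_{\gamma}}=\E\bracket{\rvx\,|\,\rvy_{\gamma}^{\gT}}$, and then argues that the two log-ratio processes coincide because each is determined by its denoiser. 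Your Fisher--Neyman reduction is, by contrast, a density-level argument; the first two paragraphs are correct and actually more diagnostic than the paper's, since they isolate the only possible source of discrepancy in the ancillary factors $h_{\gamma}$ and $\widehat h_{\gamma}$.

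The gap is in the part you yourself label ``the hard part.'' Your cancellation argument for the fiber normalizations $p_{\rvy_{\gamma}^{\gT}|\rvx}(t_i\,|\,\vx_i)=\int_{\{\gT(\cdot,\gamma)=t_i\}}p_{\rvw_{\gamma}}(\vy'-\gamma\vx_i)\,\d\sigma$ silently assumes that $\gT$ is (equivalent to) the \emph{minimal} sufficient statistic --- an orthogonal projection whose complementary coordinates are $\vx$-independent Gaussian. But the hypothesis only grants that $\gT$ is \emph{a} sufficient statistic, i.e., any measurable map from which the minimal one is recoverable; this includes nonlinear reparameterizations such as the componentwise map $\vy\mapsto\vy^{3}$, which is bijective and hence trivially sufficient. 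For such a $\gT$ the fiber ``integral'' carries a coarea/Jacobian weight evaluated at $\gamma\vx_i+\rvw_{\gamma}$, so $p_{\rvy_{\gamma}^{\gT}|\rvx}(t_1\,|\,\vx_1)\neq p_{\rvy_{\gamma}^{\gT}|\rvx}(t_2\,|\,\vx_2)$ in general and your right-hand side does not vanish; the displayed identity, read with Lebesgue densities on the codomain of $\gT$, then requires either a volume-preserving $\gT$ or densities taken with respect to the appropriate pushforward reference measure. Your second caveat --- that $\vx_1-\vx_2$ must lie in the span of support differences --- is likewise a real restriction (the paper's statement and proof are silent on both points), but within your write-up it enters as an unproven side condition rather than something your argument establishes. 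The conditional-expectation route used in the paper avoids the reference-measure bookkeeping entirely, which is the main thing your approach loses.
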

\begin{proof}

From the law of total expectation, we have
\begin{align}
\E\bracket{\rvx\,\middle|\,\rvy_{\gamma}}=\E\bracket{\E\bracket{\rvx\,\middle|\,\rvy_{\gamma},\rvy_{\gamma}^{\gT}}\,\middle|\,\rvy_{\gamma}}.\label{eq:ygamma_invariant_eq1}
\end{align}
Since $\rvy_{\gamma}\leftrightarrow\rvy_{\gamma}^{\gT}\leftrightarrow\rvx$ is a Markov chain, then $p_{\rvx\,|\,\rvy_{\gamma}^{\gT},\rvy_{\gamma}}=p_{\rvx\,|\,\rvy_{\gamma}^{\gT}}$. So we have
\begin{align}
\E\bracket{\rvx\,\middle|\,\rvy_{\gamma}^{\gT},\rvy_{\gamma}}=\E\bracket{\rvx\,\middle|\,\rvy_{\gamma}^{\gT}}.\label{eq:ygamma_invariant_eq2}
\end{align}
Substituting~\cref{eq:ygamma_invariant_eq2} into~\cref{eq:ygamma_invariant_eq1}, we get
\begin{align}
\E\bracket{\rvx\,\middle|\,\rvy_{\gamma}}=\E\bracket{\E\bracket{\rvx\,\middle|\,\rvy_{\gamma}^{\gT}}\,\middle|\,\rvy_{\gamma}}.
\end{align}
Finally, since $\rvy_{\gamma}^{\gT}$ is a function of $\rvy_{\gamma}$, then $\E\bracket{\rvx\,\middle|\,\rvy_{\gamma}^{\gT}}$ is a function of $\rvy_{\gamma}$ as well.
We can therefore pull $\E\bracket{\rvx\,\middle|\,\rvy_{\gamma}^{\gT}}$ out of the expectation, and get
\begin{align}
\E\bracket{\rvx\,\middle|\,\rvy_{\gamma}}=\E\bracket{\rvx\,\middle|\,\rvy_{\gamma}^{\gT}}.
\end{align}
From here, it is easy to see that the processes
\begin{align}
&\rvz_{\gamma}=\log{\paren{\frac{p_{\rvy_{\gamma}}\paren{\gamma\vx_{2}+\rvw_{\gamma}}}{p_{\rvy_{\gamma}}\paren{\gamma\vx_{1}+\rvw_{\gamma}}}}},\quad\mbox{and}\\
&\rvz_{\gamma}^{\gT}=\log{\paren{\frac{p_{\rvy_{\gamma}^{\gT}}\paren{\gT(\gamma\vx_{2}+\rvw_{\gamma})}}{p_{\rvy_{\gamma}^{\gT}}\paren{\gT(\gamma\vx_{1}+\rvw_{\gamma})}}}}
\end{align}
are exactly the same, as they only depend on $\E[\rvx\,|\,\rvy_{\gamma}]$ and $\E[\rvx\,|\,\rvy_{\gamma}^{\gT}]$, respectively.
\end{proof}

\newpage
\section{Additional experimental results}\label{appendix:additional_exp_results}

\subsection{Predicting two-alternative forced choice judgements}\label{appendix:bapps}
\begin{figure}[t!]
    \centering
    \includegraphics[width=\linewidth]{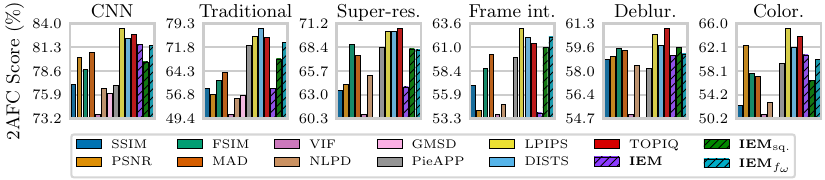}
    \caption{\textbf{Two-alternative forced choice (2AFC) performance comparison on the different distortion categories in the BAPPS dataset.} The unsupervised $\text{IEM}_{\text{sq.}}$ achieves competitive performance in most types of distortion. Our supervised variant, $\text{IEM}_{f_\omega}$, further improves the results.}
    \label{fig:bapps_similarity}
\end{figure}
\begin{figure}[t!]
    \centering
    \includegraphics[width=\linewidth]{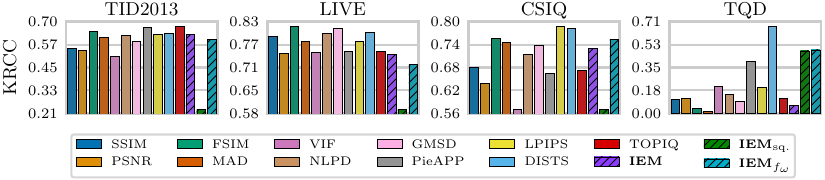}
    \caption{\textbf{Kendall correlation coefficient (KRCC) results on several full-reference image similarity benchmarks.}}
    \label{fig:classic_similarity_krcc}
\end{figure}
\begin{figure}[t!]
    \centering
    \includegraphics[width=\linewidth]{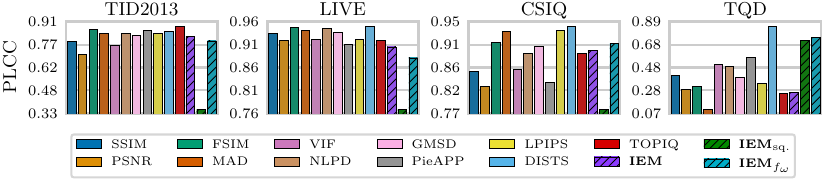}
        \caption{\textbf{Pearson linear correlation coefficient (PLCC) results on several full-reference image similarity benchmarks.} Following common practice, we map the similarity scores to the MOS scores by fitting a four-parameter logistic function.}
        \label{fig:classic_similarity_plcc}
\end{figure}
Here, we evaluate our solutions on the BAPPS dataset~\citep{8578166}, which consists of image triplets: a reference image $\vx_{\text{ref}}$ and two distorted versions, $\vx_{1}$ and $\vx_{2}$, along with a probability-of-preference score for each triplet.
These probabilities are derived from human evaluations of similarity between the reference image and each distorted version. The probability score is defined as the fraction of evaluators who preferred $(\vx_{\text{ref}}, \vx_{1})$ over $(\vx_{\text{ref}}, \vx_{2})$. Specifically, if $n_{1}$ evaluators preferred $(\vx_{\text{ref}}, \vx_{1})$ and $n_{2}$ preferred $(\vx_{\text{ref}}, \vx_{2})$, then the triplet $(\vx_{\text{ref}}, \vx_{1}, \vx_{2})$ is assigned the label $p = n_{1}/(n_{1} + n_{2})$.
The two-alternative forced choice (2AFC) score for a given similarity measure is then computed by averaging $p\cdot\1_{s_{1}<s_{2}} + (1-p)\cdot\1_{s_{1}>s_{2}} + 0.5\cdot\1_{s_{1}=s_{2}}$ across the entire dataset~\citep{8578166}.

To compute the IEM, we train an additional neural denoiser model on ImageNet-1k with images of size $64 \times 64$, which is the native resolution of images in the BAPPS dataset.
We use the same type of denoiser architecture as in \cref{sec:experiments} (see more details in \cref{appendix:training-details}).
Finally, we compute the $\text{IEM}_{f}$ using this trained denoiser, similarly to \cref{sec:experiments}.
The results are reported in \cref{fig:bapps_similarity}.

\subsection{Predicting mean opinion scores: additional metrics}
\label{app:classic_similarity_additional}
We extend the experiments from \cref{sec:full-reference-classic} by reporting additional metrics on MOS predictions for the same methods and datasets. In \cref{fig:classic_similarity_krcc,fig:classic_similarity_plcc}, we present Kendall's rank correlation coefficient (KRCC) and Pearson's linear correlation coefficient (PLCC), respectively. These correlation scores exhibit trends similar to those observed with Spearman's rank correlation coefficient (SRCC, see \cref{fig:classic_similarity}).
Note that to compute the PLCC, we follow common practice and first fit a four-parameter logistic function.

\subsection{Maximum differentiation competition against PSNR: additional details and results}\label{appendix:mad}
This section provides additional details and results for the maximum differentiation competition experiment described in \cref{section:mad}.

\paragraph{Overview of the maximum differentiation competition optimization procedure.}
To conduct a maximum differentiation competition between a given perceptual distance measure and PSNR, we begin with a reference image $\vx$.
We then sample a white Gaussian noise vector $\vepsilon$ and normalize it such that the MSE between $\vx$ and $\vx + \vepsilon$ equals a predetermined constant $C$, derived from a target PSNR level.
Formally, this normalization is defined as
\begin{align}
    \vepsilon\leftarrow C\frac{\vepsilon}{\norm{\vepsilon}},\label{eq:eps_normalization}
\end{align}
where $C=2\sqrt{d}\cdot 10^{-\text{PSNR}/20}$ and the image pixels value range is $[-1,1]$.
Given a total of $N$ optimization steps, we iteratively update $\vepsilon$ by optimizing (minimizing or maximizing) the perceptual distance between $\vx$ and $\vx + \vepsilon$ using projected gradient descent, where the radial projection in \cref{eq:eps_normalization} is applied after each step to maintain a fixed PSNR throughout the optimization.

\paragraph{Optimization of the IEM.}
Using the IEM as an optimization objective is challenging (\eg, when aiming to minimize the distance between a pair of images).
In particular, since the IEM involves computing a one-dimensional integral of denoising errors over a wide range of noise levels, the corresponding denoiser gradients must be computed and aggregated across this range.
This results in large deviations in both the scale and variance of the loss gradients across noise levels, as the scale and variance of the denoising errors at low-SNR regions are substantially larger than those at high-SNR regions, causing the low-SNR regions to dominate the optimization.
Reweighting the integral through a change of variables does not fully resolve this problem, as it does not address the variance issue.
To alleviate this issue, we propose an \emph{annealing} strategy that enables stable optimization of the IEM in the context of the maximum differentiation competition against PSNR.

For an integration range $[\Gamma_0, \Gamma]$ in \cref{eq:distance-qv} (where $\Gamma_0$ is a small value that replaces the lower bound of the integral), we optimize the IEM by progressively optimizing the integrand in \cref{eq:distance-qv}, starting at the lowest SNR $\Gamma_0$ and gradually increasing the SNR until it reaches $\Gamma$.
Formally, we construct a sequence $\alpha_1, \alpha_2, \ldots, \alpha_N$ uniformly spaced between $\log \Gamma_0$ and $\log \Gamma$, and define $\gamma_i = \exp(\alpha_i)$.
Then, at each optimization step $i = 1, 2, \hdots, N$, we update $\vepsilon$ by minimizing or maximizing the objective
\begin{align}
\gamma_i
\norm{\nabla \log p_{\rvy_{\gamma_i}}(\gamma_i \vx + \rvw_{\gamma_i})
-
\nabla \log p_{\rvy_{\gamma_i}}(\gamma_i(\vx + \vepsilon) + \rvw_{\gamma_i})}^2,
\label{eq:annealing_obj}
\end{align}
where $\rvw_{\gamma_i} \sim \mathcal{N}(\vzero, \gamma_i \mI)$ is sampled randomly and independently at each step $i$.
Note that the optimization objective in \cref{eq:annealing_obj} corresponds to the integrand in the definition of the IEM (\cref{eq:distance-qv}), where the expectation is approximated with a single noise sample (similarly to standard stochastic optimization), and the change of variables $\alpha = \log \gamma$ is applied (which introduces a multiplicative factor $\gamma$, as in \cref{eq:annealing_obj}).

The distortion $\vepsilon$ is optimized using the Adam optimizer \citep{kingma2014adam}, whose momentum term aggregates gradients across SNR values over the $N$ optimization steps.
Thus, the update at iteration $i$ implicitly incorporates information from all previous steps.
After each update, we re-normalize $\vepsilon$ according to \cref{eq:eps_normalization} to ensure that the MSE between $\vx$ and $\vx + \vepsilon$ remains equal to $C$ throughout the optimization.
We fix $\Gamma_0=10^{-6}$ (lower bound of the integral) and $\Gamma=1$ (upper bound of the integral) and use a learning rate of $5 \times 10^{-2}$ with $N = 1000$ optimization steps.
The Adam running averages coefficients are $\beta_{1}=0.9$ and $\beta_{2}=0.999$.
These hyperparameters are kept fixed across all image examples.

\paragraph{Optimization of other perceptual distance measures.}
To illustrate the difference between the IEM and other perceptual distance measures (DISTS, LPIPS, VIF, SSIM, TOPIQ, PieAPP, NLPD, and GMSD), we use each measure to minimize or maximize the perceptual distance between $\vx$ and $\vx + \vepsilon$, while maintaining a fixed PSNR level (we apply the same $\vepsilon$ projection procedure after each optimization step, as defined in \cref{eq:eps_normalization}).
All optimizations are performed using the Adam optimizer with a learning rate of $5 \times 10^{-2}$, running averages coefficients of $\beta_{1}=0.9$ and $\beta_{2}=0.999$, and $N = 1000$ optimization steps (consistent with the optimization procedure used for the IEM).
We find that all evaluated perceptual distance measures, including the IEM, are effectively optimized using these hyperparameter settings, as shown in \cref{fig:mad_bar_plots}.
We employ the implementations of these metrics provided by the \texttt{IQA-PyTorch} package \citep{pyiqa}.

We note that TOPIQ, VIF, and SSIM are perceptual \emph{similarity} measures, for which higher values indicate smaller perceptual distances. Thus, for these measures, minimizing (or maximizing) the perceptual distance corresponds to maximizing (or minimizing) the similarity measure.

\paragraph{Results.}
We conduct the maximum differentiation competition described above on several images selected from the DIV2K database~\citep{Agustsson_2017_CVPR_Workshops}, which contains high-quality general-content images.
Similarly to \cref{sec:full-reference-classic}, we preprocess each image by first center-cropping it to the length of its shorter edge, and then resizing it to $256\times 256$.
For each method, the maximum differentiation competition is performed five times under different PSNR constraints of $30\text{dB}$, $25\text{dB}$, $20\text{dB}$, $15\text{dB}$, and $10\text{dB}$.
The final results are shown in \cref{fig:mad_wolf,fig:mad_mushroom,fig:mad_penguin,fig:mad_climber}.
    
\begin{figure}
    \centering
    \includegraphics[width=1\linewidth]{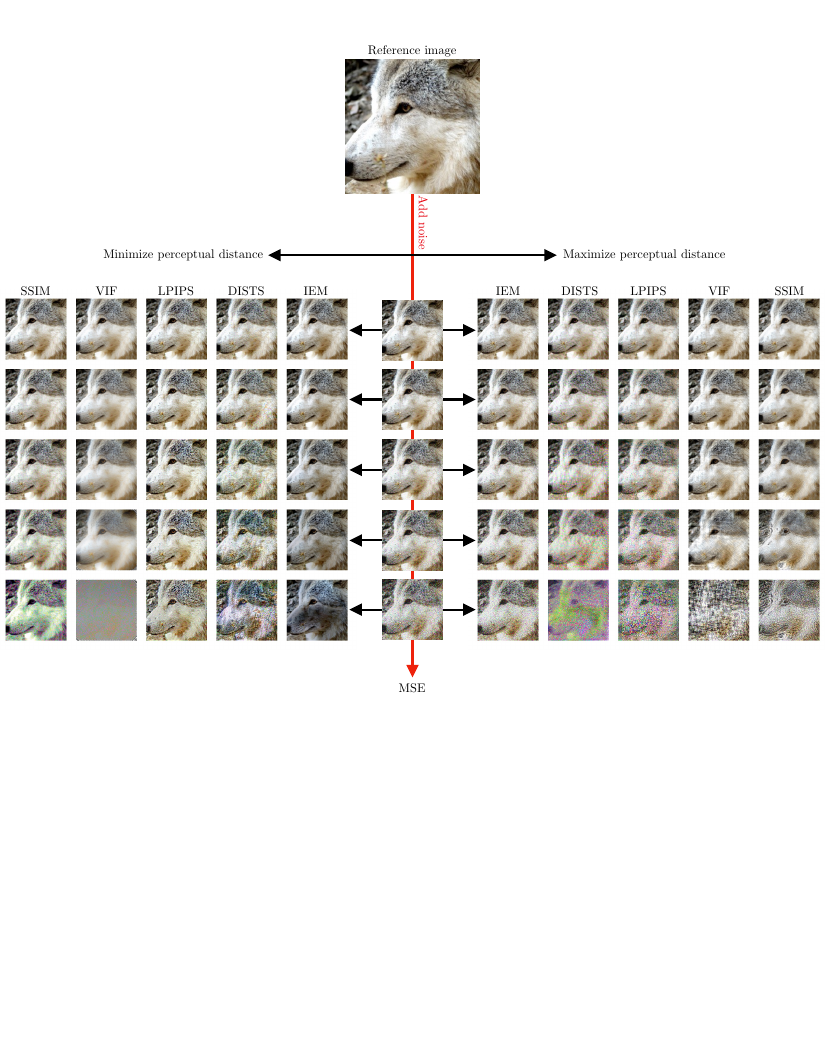}
    \caption{\textbf{Visual results of the maximum differentiation competition.}
    A reference image (top) is corrupted with increasing levels of white Gaussian noise ($30\text{dB}$, $25\text{dB}$, $20\text{dB}$, $15\text{dB}$, and $10\text{dB}$), producing a sequence of distorted images (middle column) with varying MSE (PSNR) values relative to the reference.
    Starting from each noise-distorted image, we minimize or maximize each of the perceptual distance measures shown (IEM, DISTS, LPIPS, VIF, and SSIM), while keeping the MSE (equivalently, PSNR) fixed.
    Interestingly, minimizing the IEM consistently yields high perceptual quality images that preserve the overall geometric structure of the reference image, even under large MSE (low PSNR) constraints.
    In contrast, all other methods introduce noticeable and unnatural artifacts during optimization (zoom in for best view).
    Maximizing the IEM reveals that it is most sensitive to unstructured noise perturbations that push an image off the ``data support.'' This is consistent with our theoretical analysis in \cref{sec:approach}.}
    \label{fig:mad_wolf}
\end{figure}
\begin{figure}
    \centering
    \includegraphics[width=1\linewidth]{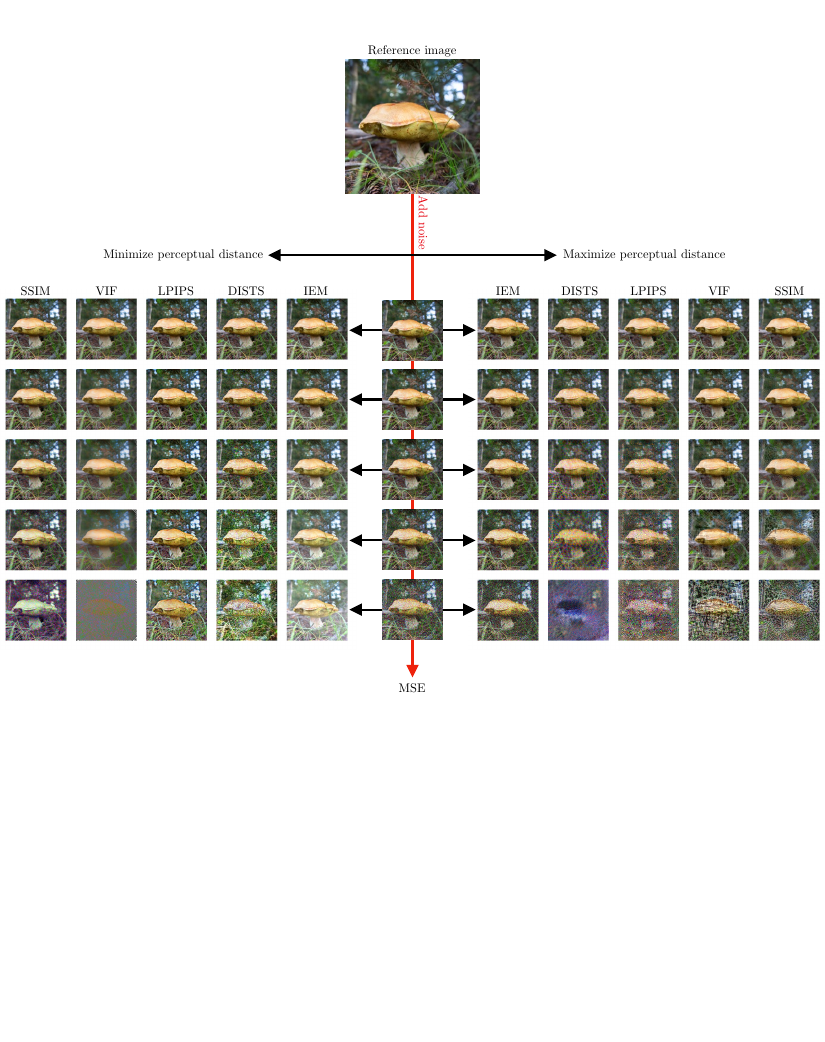}
    \caption{\textbf{Visual results of the maximum differentiation competition.}
    A reference image (top) is corrupted with increasing levels of white Gaussian noise ($30\text{dB}$, $25\text{dB}$, $20\text{dB}$, $15\text{dB}$, and $10\text{dB}$), producing a sequence of distorted images (middle column) with varying MSE (PSNR) values relative to the reference.
    Starting from each noise-distorted image, we minimize or maximize each of the perceptual distance measures shown (IEM, DISTS, LPIPS, VIF, and SSIM), while keeping the MSE (equivalently, PSNR) fixed.
    Interestingly, minimizing the IEM consistently yields high perceptual quality images that preserve the overall geometric structure of the reference image, even under large MSE (low PSNR) constraints.
    In contrast, all other methods introduce noticeable and unnatural artifacts during optimization (zoom in for best view).
    Maximizing the IEM reveals that it is most sensitive to unstructured noise perturbations that push an image off the ``data support.'' This is consistent with our theoretical analysis in \cref{sec:approach}.}
    \label{fig:mad_mushroom}
\end{figure}
\begin{figure}
    \centering
    \includegraphics[width=1\linewidth]{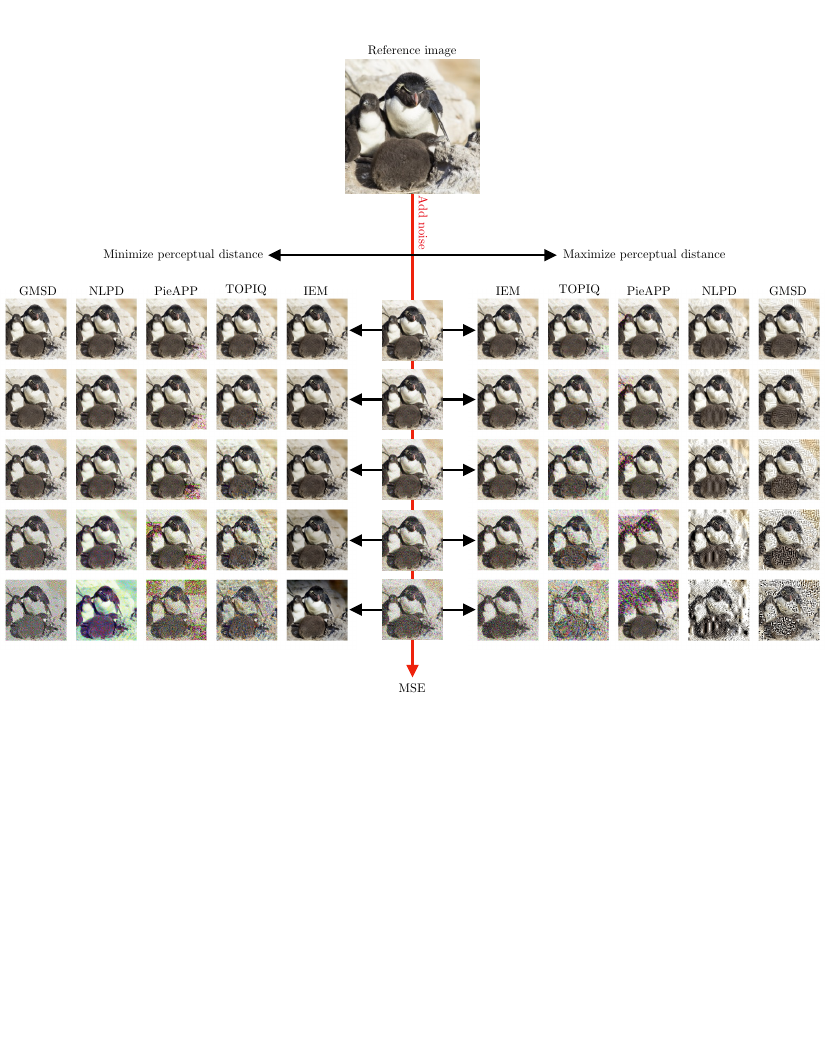}
    \caption{\textbf{Visual results of the maximum differentiation competition.}
    A reference image (top) is corrupted with increasing levels of white Gaussian noise ($30\text{dB}$, $25\text{dB}$, $20\text{dB}$, $15\text{dB}$, and $10\text{dB}$), producing a sequence of distorted images (middle column) with varying MSE (PSNR) values relative to the reference.
    Starting from each noise-distorted image, we minimize or maximize each of the perceptual distance measures shown (IEM, TOPIQ, PieAPP, NLPD, and GMSD), while keeping the MSE (equivalently, PSNR) fixed.
    Interestingly, minimizing the IEM consistently yields high perceptual quality images that preserve the overall geometric structure of the reference image, even under large MSE (low PSNR) constraints.
    In contrast, all other methods introduce noticeable and unnatural artifacts during optimization (zoom in for best view).
    Maximizing the IEM reveals that it is most sensitive to unstructured noise perturbations that push an image off the ``data support.'' This is consistent with our theoretical analysis in \cref{sec:approach}.}
    \label{fig:mad_penguin}
\end{figure}
\begin{figure}
    \centering
    \includegraphics[width=1\linewidth]{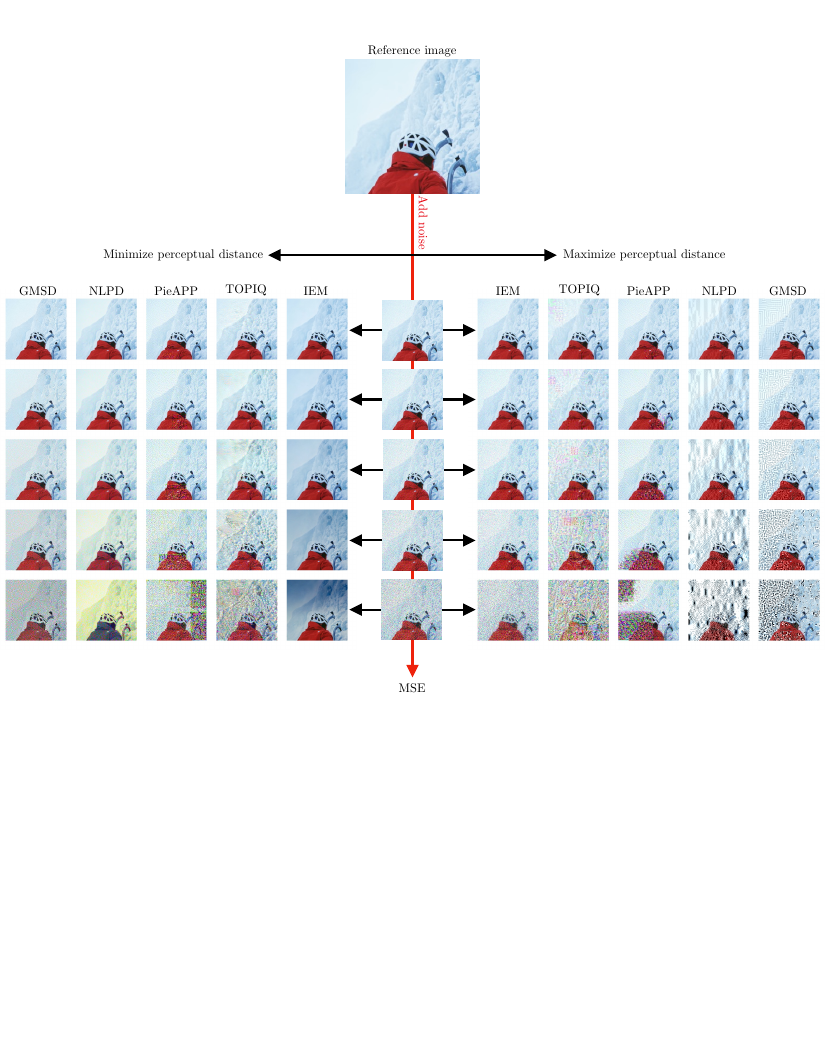}
    \caption{\textbf{Visual results of the maximum differentiation competition.}
    A reference image (top) is corrupted with increasing levels of white Gaussian noise ($30\text{dB}$, $25\text{dB}$, $20\text{dB}$, $15\text{dB}$, and $10\text{dB}$), producing a sequence of distorted images (middle column) with varying MSE (PSNR) values relative to the reference.
    Starting from each noise-distorted image, we minimize or maximize each of the perceptual distance measures shown (IEM, TOPIQ, PieAPP, NLPD, and GMSD), while keeping the MSE (equivalently, PSNR) fixed.
    Interestingly, minimizing the IEM consistently yields high perceptual quality images that preserve the overall geometric structure of the reference image, even under large MSE (low PSNR) constraints.
    In contrast, all other methods introduce noticeable and unnatural artifacts during optimization (zoom in for best view).
    Maximizing the IEM reveals that it is most sensitive to unstructured noise perturbations that push an image off the ``data support.'' This is consistent with our theoretical analysis in \cref{sec:approach}.}
    \label{fig:mad_climber}
\end{figure}
\begin{figure}
    \centering
    \includegraphics[width=1\textwidth]{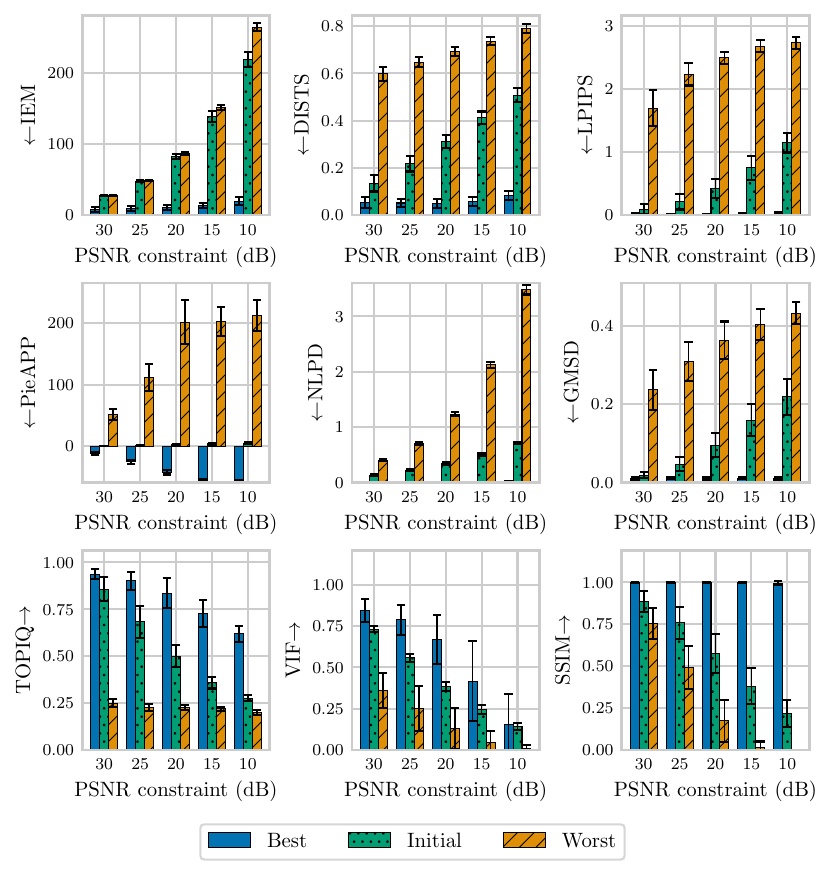}
    \caption{\textbf{Best, worst, and initial values of different perceptual metrics in the maximum differentiation competition against PSNR.} 
    To verify that the evaluated perceptual metrics are effectively optimized in the maximum differentiation competition against PSNR, we present the average best, worst, and initial values of each metric (shown as bar plots), together with their standard deviations (shown as error bars), computed over 30 images randomly selected from the DIV2K dataset. The best and worst values correspond to the final perceptual distances obtained through minimization or maximization of the distance under a fixed PSNR constraint, respectively, while the initial value represents the perceptual distance between the initial distorted (noisy) image and the reference image.
    While global optima are not guaranteed, these results show that all perceptual distance measures are consistently optimized. For example, the best SSIM values remain close to their maximum possible value of $1$ across all PSNR constraints, whereas the worst SSIM values approach $0$ as we decrease the PSNR constraint. This trend is consistent across all evaluated perceptual metrics.
    Thus, for each perceptual metric, the images on the left-hand side of \cref{fig:mad_mushroom,fig:mad_climber,fig:mad_penguin,fig:mad_wolf} (in the column corresponding to that metric) are perceived as similar to the reference image according to the metric, whereas the images on the right-hand side are perceived as perceptually distant from the reference image.
    Finally, we note that TOPIQ, VIF, and SSIM are perceptual \emph{similarity} measures, for which higher values indicate smaller perceptual distances. Thus, for these measures, minimizing (or maximizing) the perceptual distance corresponds to maximizing (or minimizing) the measure. For this reason, the ``best'' (blue) bar plots for these metrics are high, while the ``worst'' (orange) bar plots are low.}
    \label{fig:mad_bar_plots}
\end{figure}

\newpage
\subsection{A toy clustering example}\label{appendix:clustering}
To illustrate a potential future application of the IEM, we use it to solve a simple two-dimensional clustering problem in which the K-medoids algorithm \citep{kaufman}, when coupled with the Euclidean distance, fails to provide a satisfactory result.
Specifically, we consider a Gaussian mixture density consisting of two modes,
\begin{align}
    p_{\rvx}(\vx)=0.5\mathcal{N}\paren{\vx;\begin{pmatrix}
0 \\
1
\end{pmatrix},\begin{pmatrix}
1 & 0.95\\
0.95 & 1
\end{pmatrix}}+0.5\mathcal{N}\paren{\vx;\begin{pmatrix}
0 \\
-1 
\end{pmatrix},\begin{pmatrix}
1 & 0.95\\
0.95 & 1
\end{pmatrix}}.
\end{align}
We apply the K-medoids algorithm on 500 samples drawn independently from this density, using either the IEM or the Euclidean distance as the distance measure.
To compute the IEM, we train a simple unsupervised neural denoiser (a 5-layer fully-connected network with GELU activations) on the range $\log{\gamma}\in[2^{-10},2^{10}]$, and compute the integral in \cref{eq:distance-qv} on the same range after changing the integration variable to $\log \gamma$.
Importantly, the denoiser only depends on $\rvy_{\gamma}$ and $\gamma$, so it is ``unaware'' of the clusters.
We use 200 discretization steps to numerically solve the integral, and approximate the expectation of the integrand by averaging over 50 random Brownian motion paths.
Remarkably, the IEM resolves the clustering problem effectively: it selects medoids located at the means of each mode and assigns samples to their corresponding modes with high accuracy, as illustrated in \cref{fig:clustering}.
\begin{figure}
    \centering
    \includegraphics[width=1\textwidth]{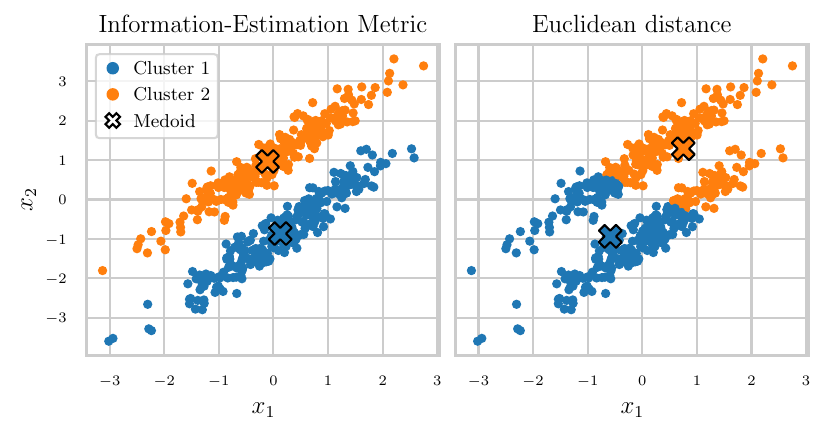}
    \caption{\textbf{Utilizing the IEM to solve a toy clustering Gaussian mixture problem.} We apply the K-medoids algorithm twice: once with the IEM (left panel) and once with the Euclidean distance (right panel). While K-medoids with the Euclidean distance fails to recover the correct cluster separation, using the IEM yields an accurate solution that aligns with the underlying modes.}
    \label{fig:clustering}
\end{figure}

\section{Training and implementation details}\label{appendix:training-details}

\subsection{Numerical integration}\label{appendix:numerical_integration}
Computing any of our distance functions requires numerically solving an SDE. We use the Euler--Maruyama discretization for this purpose, applying a change of variables so that the integral is evaluated over log-SNR values instead of SNR values.
This common technique improves numerical stability and distributes integration steps more effectively across the SNR domain.
To reduce computational demands, we compute the integrals along a single Brownian motion path in all our experiments. Compared with averaging over multiple paths, we observe no significant difference in the resulting approximated distances. 
An independent Brownian motion path is generated randomly each time we compute the distance.
Throughout our experiments, we use 512 discretization steps, which is a relatively large number. This choice ensures that our results reflect the intended distance measures more accurately. Empirically, we find that using fewer steps (\eg, 128) does not alter the results.

\subsection{Denoiser training}\label{appendix:training-details-hdit}
The HDiT denoiser model training hyper-parameters are given in~\cref{tab:training-hyperparams}.
To resize the ImageNet-1k training images to $256\times 256$ resolution, we center-crop each image to its shorter edge dimension and then use Lanczos resampling.
For ImageNet-1k $64\times 64$, we use the official resized training set provided on the \href{https://www.image-net.org/index.php}{ImageNet website}.

We note that our trained denoiser model uses the Variance Exploding (VE) formulation~\citep{song2020score} (specifically, EDM~\citep{karras2022elucidating}), where the SNR $\gamma$ and the noise level $\sigma$ are related via $\gamma=1/\sigma^{2}$.

\subsection{Learning $f'_{\omega}$}\label{appendix:f_omega_details}
We describe the architecture of our learned $f'_{\omega}$ in \cref{tab:f_prime_training}.
This function contains about 3M parameters in all our experiments.

In~\cref{sec:full-reference-classic}, we train $f'_{\omega}$ using the KADID-10k~\citep{8743252} and DTD~\citep{6909856} datasets, similarly to~\citep{ding2020iqa}. We cap the integral in~\cref{eq:qv_distance_f} at $\Gamma=10^{2}$. For KADID-10k, we employ a simple pairwise ranking loss to encourage agreement between the rankings of the predicted distances and the ground-truth MOS scores. Specifically, given a mini-batch of triplets 
\begin{align}
\{(\vx_{\text{ref.}}^{(i)}, \vx_{\text{dist.}}^{(i)}, s^{(i)})\}_{i=1}^{N},
\end{align}
where $\vx_{\text{ref.}}^{(i)}$ is a reference image, $\vx_{\text{dist.}}^{(i)}$ is its distorted version, and $s^{(i)}$ is the MOS score for this pair, we compute the distances
\begin{align}
d^{(i)} = \text{IEM}_{f_{\omega}}\paren{\vx_{\text{ref.}}^{(i)}, \vx_{\text{dist.}}^{(i)}}.
\end{align}
The pairwise logistic ranking loss is then defined as
\begin{align}
\mathcal{L} = \frac{1}{N^2} \sum_{i,j} \log\Big(1 + \exp\big(-\tfrac{1}{\tau}(s^{(i)} - s^{(j)})(d^{(i)} - d^{(j)})\big)\Big),
\end{align}
where $\tau$ is a temperature parameter learned jointly with $\omega$.
For DTD, we randomly crop subimages of size $256 \times 256$ from each texture image, and apply a smooth $L_{1}$ loss with $\beta=10.0$ to the distances obtained for these patches, encouraging $\text{IEM}_{f_{\omega}}$ to be small for textures of the same kind.

In~\cref{appendix:bapps}, we train $f'_{\omega}$ using the training split of BAPPS. We cap the integral in~\cref{eq:qv_distance_f} at $\smash{\Gamma=10^{6}}$.
Similarly to~\citet{8578166}, the loss function is a standard cross-entropy loss applied to the output of a small fully connected network with nonlinear activations. This network maps the raw distances $\text{IEM}_{f_{\omega}}$ to logits, which are then passed to the cross-entropy loss. The additional fully connected network is trained jointly with $f'_{\omega}$.

In all experiments, we use the Adam optimizer~\citep{kingma2014adam} with a learning rate of $10^{-3}$, a batch size of 512, and train for 100 epochs.
We apply an exponential learning rate decay with a factor $0.95$ after each epoch.

\subsection{Implementation of the illustrative examples}\label{appendix:illustrative-example}
\begin{figure}
    \centering
    \includegraphics[width=1\textwidth]{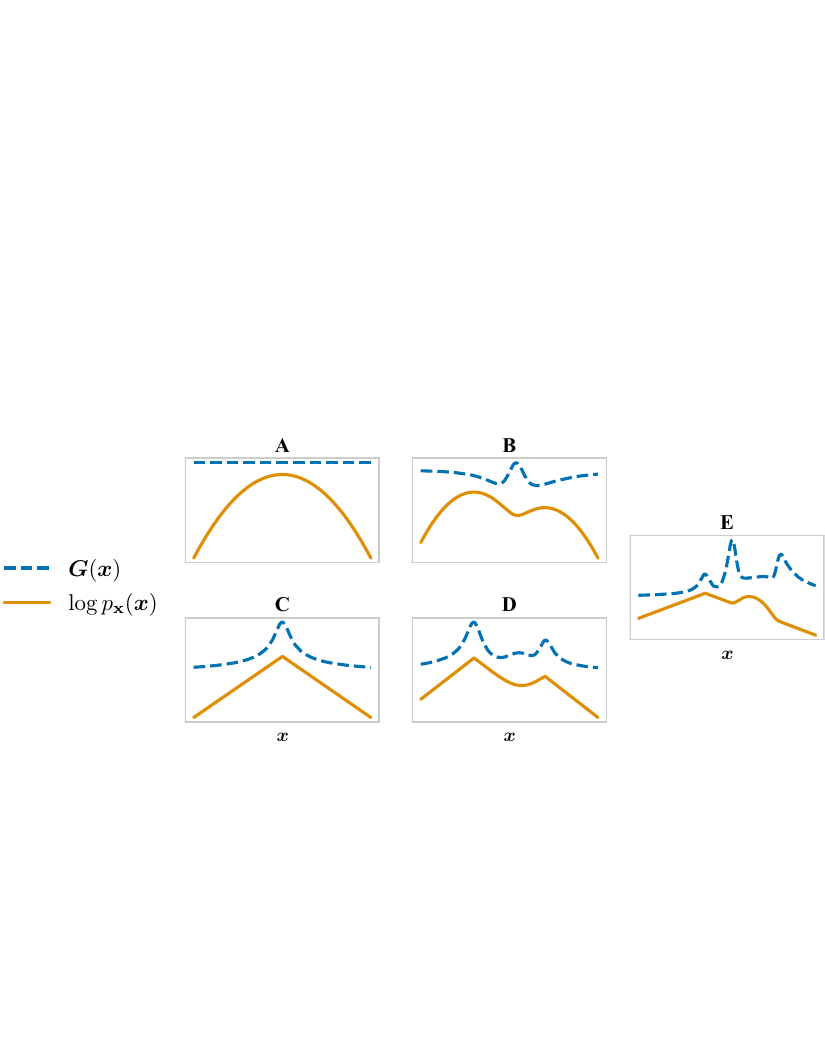}
    \caption{\textbf{Local sensitivity of the Information–Estimation Metric (IEM) for several one-dimensional prior densities.} 
    In each subplot, the local metric $\mG(\vx)$ (\cref{eq:local_metric_hessian}, using a very large $\snr$) is plotted and compared to the log-density $\log p_{\rvx}$.
    The integral in \cref{eq:local_metric_hessian} is computed over $\log{\gamma}\in [2^{-5}, 2^5]$ after applying change of variables.
    \textbf{(A)} For a Gaussian (light-tailed) density, $\mG(\vx)$ coincides with the Mahalanobis metric, which is constant everywhere. 
    \textbf{(B)} For a Gaussian mixture density, $\mG(\vx)$ decreases near the local maxima and increases near the local minima; it converges to a constant value only at the tails.
    \textbf{(C)} For a Laplace (heavy-tailed) density, the sensitivity of the local metric increases with the probability of the signal $\vx$.
    \textbf{(D)} For a Laplace mixture density, the sensitivity grows near both the local maxima and minima, where the absolute curvature of the log-density is relatively large. 
    \textbf{(E)} For a mixture of Laplace and Gaussian densities (left and right modes, respectively), the sensitivity is relatively constant only in the region where the Gaussian density dominates the Laplace density.
    All of these plots illustrate that the sensitivity of $\mG(\vx)$ is governed by the local curvature of $\log p_{\rvx}$ around $\vx$.
    }

    \label{fig:onedim}
\end{figure}
We provide additional illustrative examples in \cref{fig:onedim}, where we consider three different one-dimensional prior densities.
\paragraph{Prior densities used in~\cref{fig:illustration}.}
In the middle column of~\cref{fig:illustration}, we use a Gaussian density 
\begin{align}
    p_{\rvx}(\vx)=\mathcal{N}\paren{\vx;\begin{pmatrix}
0 \\
1
\end{pmatrix},\begin{pmatrix}
1 & 0\\
0 & 0.1
\end{pmatrix}}.
\end{align}
In the right-most column of~\cref{fig:illustration}, we use a Gaussian mixture density 
\begin{align}
    p_{\rvx}(\vx)=0.3\mathcal{N}\paren{\vx;\begin{pmatrix}
0 \\
1
\end{pmatrix},\begin{pmatrix}
1 & 0\\
0 & 0.1
\end{pmatrix}}+0.7\mathcal{N}\paren{\vx;\begin{pmatrix}
1 \\
-1 
\end{pmatrix},\begin{pmatrix}
1 & 0.5\\
0.5 & 0.4
\end{pmatrix}}.
\end{align}
In the left-most column of~\cref{fig:illustration}, we use a two-dimensional Laplace density obtained by taking the product of two one-dimensional densities,
\begin{align}
    p_{\rvx}(\vx) 
    &= \text{Lap}(x_{1};\mu=0,\,b=0.3)\times
       \text{Lap}(x_{2};\mu=1,\,b=0.1),
\end{align}
where $\vx = (x_{1},x_{2})$.

\paragraph{Numerical computation of the IEM and the associated local metric $\mG(\vx,\snr)$.}
For each of the prior densities above, we write the function $\log p_{\rvy_\gamma}$ in closed form in PyTorch and compute $\nabla\log p_{\rvy_\gamma}$ and $\nabla^{2}\log p_{\rvy_\gamma}$ using \verb|torch.autograd|.
The IEM in \cref{eq:distance-qv} (global distance) is computed using 200 uniformly spaced discretization steps over $\smash{\log \gamma \in [2^{-10}, 2^{10}]}$ for all prior densities, with 50 random Brownian-motion paths to estimate expectations.
The local metric $\mG(\vx,\snr)$ in \cref{eq:local_metric_hessian} is computed using the same number of discretization steps and Brownian-motion paths, while the integral is taken over $\smash{\log \gamma \in [2^{-4}, 2^{4}]}$.

\section{LLMs usage}
We used Large Language Models (LLMs) for minor text polishing and assistance in generating figures.
\newpage
\begin{table}
\renewcommand{\arraystretch}{1.3}

\caption{HDiT architecture details and training hyperparameters for our two configurations.}
\begin{center}
\begin{tabular}{@{}ccc@{}}
\toprule
\textbf{Hyper-parameter} &
\textbf{\begin{tabular}{c}ImageNet-64$^2$\end{tabular}} &
\textbf{\begin{tabular}{c}ImageNet-256$^2$\end{tabular}} \\ \midrule

\multicolumn{1}{c|}{Parameters} & 11.6M & 22.1M \\
\hline
\multicolumn{1}{c|}{Training steps} & 400k & 400k \\
\hline
\multicolumn{1}{c|}{Batch size} & 256 & 256 \\
\hline
\multicolumn{1}{c|}{Image size} & 64$\times$64 & 256$\times$256 \\
\hline
\multicolumn{1}{c|}{Precision} & bfloat16 & bfloat16 \\
\hline
\multicolumn{1}{c|}{Training hardware} & 1 H100 80GB & 1 H100 80GB \\
\hline
\multicolumn{1}{c|}{Training time} & 1 day & 3 days \\
\hline
\multicolumn{1}{c|}{Patch size} & 4 & 4 \\
\hline
\multicolumn{1}{c|}{\begin{tabular}{c}Levels\\(local + global attention)\end{tabular}} & 1 + 1 & 1 + 1 \\
\hline
\multicolumn{1}{c|}{Depth} & [2, 11] & [2, 11] \\
\hline
\multicolumn{1}{c|}{Widths} & [64, 128] & [128, 256] \\
\hline
\multicolumn{1}{c|}{\begin{tabular}{c}Attention heads\\(width / head dim.)\end{tabular}} & [1, 2] & [2, 4] \\
\hline
\multicolumn{1}{c|}{Attention head dim.} & 64 & 64 \\
\hline
\multicolumn{1}{c|}{\begin{tabular}{c}Neighborhood\\kernel size\end{tabular}} & 7 & 7 \\
\hline
\multicolumn{1}{c|}{Mapping depth} & 1 & 1 \\
\hline
\multicolumn{1}{c|}{Mapping width} & 768 & 768 \\
\hline
\multicolumn{1}{c|}{Data sigma} & 0.5 & 0.5 \\
\hline
\multicolumn{1}{c|}{Sigma sampling density} & log-uniform & log-uniform \\
\hline
\multicolumn{1}{c|}{Sigma range} & [$10^{-3}$,$10^{3}$] & [$10^{-3}$,$10^{3}$] \\
\hline
\multicolumn{1}{c|}{Optimizer} & AdamW & AdamW \\
\hline
\multicolumn{1}{c|}{Learning rate} & $5\cdot 10^{-4}$ & $5\cdot 10^{-4}$ \\
\hline
\multicolumn{1}{c|}{\begin{tabular}{c}Learning rate\\scheduler\end{tabular}} & Constant (no warmup) & Constant (no warmup) \\
\hline
\multicolumn{1}{c|}{AdamW betas} & [0.9, 0.95] & [0.9, 0.95] \\
\hline
\multicolumn{1}{c|}{AdamW eps.} & $10^{-8}$ & $10^{-8}$ \\
\hline
\multicolumn{1}{c|}{Weight decay} & $10^{-2}$ & $10^{-2}$ \\
\hline
\multicolumn{1}{c|}{EMA decay} & 0.9999 & 0.9999 \\ \bottomrule
\end{tabular}
\end{center}
\label{tab:training-hyperparams}
\end{table}

\begin{table}[t]
\renewcommand{\arraystretch}{1.3}
\centering
\begin{threeparttable}
\caption{Architecture and hyperparameters of our $f_{\omega}$ network.}
\begin{tabular}{|c|c|c|}
\hline
\textbf{Layer} & \textbf{Details} & \textbf{Output Shape} \\
\hline
Input & $\{z_{\gamma_{i}}\}_{i=1}^{512} \in \mathbb{R}^{512}$ & $512$ \\
\hline
Absolute value & $z_{\gamma_i}\leftarrow|z_{\gamma_i}|$ & $512$ \\
\hline
Scale & Multiply by learnable $\alpha \in \mathbb{R}^{512}$ & $512$ \\
\hline
\multirow{3}{*}{1--5} 
  & MaskedLinear(512, 512)\tnote{1} & \multirow{3}{*}{$512$} \\
  & + GELU &  \\
  & + Dropout(0.1) &  \\
\hline
Output & MaskedLinear(512, 512)\tnote{1} & $512$ \\
\hline
Final transform & $\log(1 + x^2)$ & $512$ \\
\hline
\end{tabular}
\label{tab:f_prime_training}
\begin{tablenotes}
\item[1] \textbf{MaskedLinear:} a fully connected linear layer with a lower-triangular binary mask applied to its weight matrix. 
This enforces a causal (autoregressive) structure by ensuring that the $i$-th output depends only on the first $i$ inputs.
\end{tablenotes}
\end{threeparttable}
\end{table}

\end{document}